\documentclass[conference]{IEEEtran}
\IEEEoverridecommandlockouts

\usepackage{cite}
\usepackage{algorithmic}
\usepackage{graphicx}
\usepackage{textcomp}
\usepackage{xcolor}
\usepackage{balance}
\usepackage[cmex10]{amsmath}
\usepackage{amsmath}
\usepackage{amssymb}
\usepackage{amsthm}
\usepackage{amsfonts}
\usepackage{bm}
\usepackage{xfrac}
\usepackage{empheq}
\usepackage[normalem]{ulem} 
\usepackage{soul} 
\usepackage{mathtools}
\newtheorem{theorem}{Theorem}

\newtheorem{proposition}{Proposition}

\newtheorem{corollary}{Corollary}
\newtheorem{remark}{Remark}
\theoremstyle{definition}
\newtheorem{definition}{Definition}
\def\BibTeX{{\rm B\kern-.05em{\sc i\kern-.025em b}\kern-.08em
    T\kern-.1667em\lower.7ex\hbox{E}\kern-.125emX}}

\usepackage{enumitem}
\usepackage{microtype}
\usepackage{booktabs}
\usepackage{graphicx}
\usepackage{array}
\usepackage{bbm}

\newcommand{\X}{\mathcal X}
\newcommand{\Y}{\mathcal Y}
\newcommand{\1}{\mathbbm 1}

\newcommand{\Z}{\mathbb Z}
\newcommand{\E}{\mathbb E}

\newcommand{\diam}{\operatorname{diam}}
\newcommand{\pos}[1]{\left[#1\right]_+}
\begin{document}

\title{
Sparse Discrete Laplace and Gaussian Mechanisms under Local Differential Privacy 
}

\author{\IEEEauthorblockN{1\textsuperscript{st} Amirreza Zamani}
\IEEEauthorblockA{\textit{Department of Information Science and Engineering, KTH} \\
Stockholm, Sweden \\
amizam@kth.se}
\and
\IEEEauthorblockN{2\textsuperscript{nd} Sajad Daei}
\IEEEauthorblockA{\textit{Department of Information Science and Engineering, KTH} \\
Stockholm, Sweden \\
sajado@kth.se}
\and
\IEEEauthorblockN{3\textsuperscript{rd} Parastoo Sadeghi}
\IEEEauthorblockA{\textit{School of Engineering and Technology, UNSW} \\
Canberra, Australia \\
p.sadeghi@unsw.edu.au}
\and
\IEEEauthorblockN{4\textsuperscript{th} Mikael Skoglund}
\IEEEauthorblockA{\textit{Department of Information Science and Engineering, KTH} \\
Stockholm, Sweden \\
skoglund@kth.se}}
\maketitle
\begin{abstract}
We study sparse locally private channels of the form
$
M(y\mid x)\propto w(x,y) \1\{y\in S(x)\},
$
where the admissible output set $S(x)$ is allowed to depend on the private input $x$ and is assumed to be small. Here, we consider the sparse discrete-Laplace family with kernel $w(x,y)=e^{-\lambda d(x,y)}$ and the sparse Gaussian family with kernel $w(x,y)=e^{-d(x,y)^2/(2\sigma^2)}$. For both families we give exact characterizations of pure and approximate local differential privacy. For pure $\varepsilon$-local differential privacy, we show that input-dependent sparse supports are obtained when all supports coincide. For $(\varepsilon,\delta)$-local differential privacy, we derive exact formulas for the privacy defect in terms of support leakage and excess privacy loss on the overlap region. We then specialize the analysis to radius-truncated sparse discrete-Laplace and radius-truncated sparse Gaussian mechanisms and obtain explicit privacy-sparsity tradeoffs in terms of the support size $s$. In particular, we show that nontrivial approximate local privacy requires a minimum support size, whereas larger supports reduce support leakage but increase distortion. For the Gaussian family, the overlap term exhibits an additional quadratic dependence on the support radius, which implies a sharper tradeoff between privacy and sparsity. These results identify the support cardinality as the intrinsic complexity parameter of the mechanism and yield an optimal design principle: choose the smallest support size that satisfies the target privacy constraint.
\end{abstract}

\section{Introduction}

Differential privacy was introduced as a formal framework to limit the information leaked by the output of a randomized algorithm about any single input record \cite{dwork2006calibrating,dwork2014book}. In the \emph{local} model, each user privatizes its data before any collection takes place, thereby removing the need for a trusted curator. This model has become central in both theory and deployment, with classical examples including randomized-response-style mechanisms and practical systems  \cite{dwork2014book,rappor2014}.

However, considering design problems with discrete alphabets, many local randomizers spread probability mass over the entire output alphabet. This can be statistically inefficient when only a small set of outputs are plausible surrogates for a given input. Motivated by this observation, we study sparse families of locally private channels in which each input $x$ is allowed to be randomized only over a small admissible output set $S(x)$.

The support restriction serves three roles. First, it reflects the geometry of the discrete problem by concentrating probability on outputs that remain close to the private input. Second, it reduces distortion by excluding far-away outputs that contribute little to utility but substantially to error. Third, under approximate local privacy, the support mismatch between different inputs gives a natural interpretation of the $\delta$-budget: outputs that can occur under one input but not another are precisely the exceptional events accounted for by $\delta$.

In this paper, we focus on two sparse families. The first is the sparse discrete-Laplace family
\[
Q_\lambda(y\mid x)=\frac{e^{-\lambda d(x,y)}}{Z_x}\1\{y\in S(x)\},
,
\]
where $Z_x=\sum_{y'\in S(x)} e^{-\lambda d(x,y')}$ and $\lambda>0$ is an inverse-temperature parameter. The second is the sparse Gaussian family
\[
G_\sigma(y\mid x)=\frac{e^{-d(x,y)^2/(2\sigma^2)}}{W_x}\1\{y\in S(x)\},
,
\]
where $W_x=\sum_{y'\in S(x)} e^{-d(x,y')^2/(2\sigma^2)}$.
In both cases, $d$ is a discrete metric or distortion function and $S(x)\subseteq \Y$ is a support set whose cardinality plays the role of the sparsity level.

\paragraph{Positioning and novelty}
Sparsity under local privacy has already been studied at the \emph{estimation level}. In particular, sparse discrete distribution estimation under $\varepsilon$-LDP has minimax sample complexity governed by the support size rather than only by the ambient alphabet size \cite{acharya2021sparse}. Likewise, Laplace-, geometric-, and Gaussian-type perturbation mechanisms are classical objects in the privacy literature, including metric-based variants such as geo-indistinguishability and local counting mechanisms based on geometric noise \cite{andres2013geo,kacem2018geometric,balle2018gaussian,canonne2020discretegaussian}. The likely new part of the present note is therefore not sparsity under LDP in general, nor the use of Laplace/Gaussian kernels themselves, but rather the explicit study of \emph{mechanism-level sparsity} through input-dependent support sets $S(x)$ and the resulting privacy-sparsity tradeoffs. To the best of our knowledge, this exact theorem-level treatment of sparse-support local channels is not standard in the existing local privacy literature.

\paragraph{Contributions}
Our contributions are as follows.
\begin{enumerate}[label=(\roman*)]
    \item For discrete-Laplace and Gaussian families with sparse support, we obtain exact characterizations of pure $\varepsilon$-LDP and approximate $(\varepsilon,\delta)$-LDP.
    \item We show that input-dependent sparse supports with pure local privacy are obtained when all supports coincide.
    \item For radius-truncated sparse mechanisms, we derive explicit formulas for the privacy defect and decompose it into a support-leakage term and an overlap excess-loss term.
    \item We show that the support size $s=|S(x)|$ acts as an intrinsic complexity parameter of the channel: nontrivial approximate privacy requires a minimum level of support overlap, while larger supports reduce leakage but increase distortion.
    \item We derive mechanism-design rules showing that the distortion-optimal sparse-support mechanism is obtained by choosing the smallest support size compatible with the target privacy constraint.
\end{enumerate}

\subsection{Related work}

At the level of the privacy framework, differential privacy was introduced by
Dwork, McSherry, Nissim, and Smith~\cite{dwork2006calibrating}, and the local
model is discussed extensively in the monograph of Dwork and
Roth~\cite{dwork2014book}. Practical deployment examples include
RAPPOR~\cite{rappor2014}, while, from a historical perspective, the
local model can be viewed as a formal descendant of randomized response,
originally introduced by Warner~\cite{warner1965randomized}.

At the level of the statistical problem, sparse estimation under local privacy
is already well understood in several important settings. The most directly
relevant work for our purposes is the sample-complexity characterization of
sparse discrete distribution estimation under LDP by Acharya, Kairouz, Liu, and
Sun~\cite{acharya2021sparse}, which shows that the intrinsic support size,
rather than only the ambient alphabet size, governs the difficulty of the
estimation problem. More broadly, minimax lower and upper bounds for locally
private estimation have been established by Duchi, Jordan, and
Wainwright~\cite{duchi2014local,duchi2018minimax}, highlighting the statistical
cost of local privacy and providing a general information-theoretic backdrop for
mechanism design under LDP.

At the level of noise families and metric-based mechanisms, our work builds on
several lines. Metric-based Laplace mechanisms and their privacy--utility
trade-offs have been developed in the geo-indistinguishability
literature~\cite{andres2013geo}. In the local model, Kacem and Palamidessi
study geometric noise for locally private counting queries, including truncated
variants~\cite{kacem2018geometric}. On the Gaussian side, Balle and Wang
analyze the Gaussian mechanism using exact analytical
calibration~\cite{balle2018gaussian}, while Canonne, Kamath, and Steinke show
that the discrete Gaussian matches the privacy and utility behavior of the
continuous Gaussian while being exactly samplable on finite
computers~\cite{canonne2020discretegaussian}.

Several additional works motivate broader discrete mechanism classes beyond the
Laplace and Gaussian families considered explicitly in this paper. McSherry and
Talwar~\cite{mcsherry2007mechanism} introduced the exponential mechanism, which
provides a general score-based way to sample from a discrete range under
differential privacy and serves as a natural conceptual template for
sparse-support kernels of the form
$
M(y\mid x)\propto \exp(\eta u(x,y))\,\mathbf 1\{y\in S(x)\}.
$
Geng and Viswanath~\cite{geng2012optimal} showed that staircase-shaped noise is
optimal for a broad class of utility or cost objectives in the central model,
suggesting sparse staircase families as a natural alternative to sparse
discrete-Laplace channels. In the local model, Kairouz, Oh, and
Viswanath~\cite{kairouz2014extremal} proved that staircase mechanisms are
extremal for a broad class of information-theoretic utility functions under
$\varepsilon$-LDP, thereby providing a general optimization perspective closely
related to our support-constrained viewpoint. Their subsequent work on discrete
distribution estimation under local privacy~\cite{kairouz2016discrete}
established the order-optimality of $k$-ary randomized response and RAPPOR in
different privacy regimes, making sparse or neighborhood-restricted
randomized-response mechanisms a relevant point of comparison for our
framework.

More recent discrete-noise work further broadens the design space. Agarwal,
Kairouz, and Liu~\cite{agarwal2021skellam} analyzed the Skellam mechanism and
showed that it can achieve Gaussian-like privacy--accuracy trade-offs in
distributed settings, while Chen, \"Ozg\"ur, and
Kairouz~\cite{chen2022poisson} introduced the Poisson binomial mechanism, a
bounded-support discrete mechanism whose communication and support behavior are
especially relevant to mechanism-level sparsity. Finally, our work is also
related in spirit to utility-optimized local privacy mechanisms. In particular,
Murakami et al.~\cite{murakami2019utility} study local mechanisms explicitly
optimized for estimation utility under privacy constraints, which is
complementary to the present perspective.

Practical statistical disclosure systems also provide motivation for studying sparse and
finite-support mechanisms. In particular, Sadeghi and
Chien~\cite{Sadeghi_Chien_2024} analyze the perturbation methodology used by
the Australian Bureau of Statistics (ABS), whose disclosure-control framework
employs bounded and sparse perturbation mechanisms for counting queries in the
centralized differential privacy setting. They show, via a Lagrangian
characterization under symmetric support constraints, that the resulting optimal
perturbation distribution takes a discrete-Gaussian form, and they further
analyze the associated privacy-variance trade-offs through the corresponding
$(\epsilon,\delta)$ guarantees. This work focuses on centralized
privacy for counting queries and it provides practical evidence that sparse,
finite-support, and truncated discrete-Gaussian mechanisms arise naturally in
real-world statistical disclosure systems, thereby further motivating the
mechanism-level sparsity perspective studied in this paper.

The key distinction of our paper is that sparsity is imposed directly at the
level of the local channel itself, through input-dependent support sets
$S(x)$. Thus, rather than focusing only on estimation rates or on globally
supported perturbation families, we study the privacy, utility, and structural
consequences of \emph{mechanism-level sparsity}.


\section{Preliminaries}

Let $\X\subseteq \Z$ be a finite input alphabet and let $\Y$ be a discrete output alphabet. A channel $Q(\cdot\mid x)$ is said to be pure $\varepsilon$-locally differentially private if for all $x,x'\in \X$ and all measurable sets $A\subseteq \Y$,
\[
Q(A\mid x)\le e^{\varepsilon}Q(A\mid x').
\]
It is said to be $(\varepsilon,\delta)$-locally differentially private if
\[
Q(A\mid x)\le e^{\varepsilon}Q(A\mid x')+\delta
\]
for all $x,x'\in \X$ and all measurable sets $A\subseteq \Y$.

Since $\Y$ is discrete throughout the paper, $(\varepsilon,\delta)$-LDP is equivalent to the hockey-stick divergence bound
\[
\sum_{y\in \Y} \pos{Q(y\mid x)-e^{\varepsilon}Q(y\mid x')}\le \delta
\qquad \text{for all }x,x'\in \X,
\]
where $[x]_+=\max\{0,x\}$.
We use this characterization repeatedly.

\section{Sparse discrete-Laplace mechanisms}

\begin{definition}[Sparse discrete-Laplace family]\label{def:sparse-family}
Let $d:\X\times\Y\to [0,\infty)$ be a distortion function, typically $d(x,y)=|x-y|$. For each $x\in\X$, let $S(x)\subseteq \Y$ be a nonempty support set. For $\lambda>0$, define
\[
Q_\lambda(y\mid x)=\frac{e^{-\lambda d(x,y)}}{Z_x}\1\{y\in S(x)\},
\]
where $Z_x=\sum_{y'\in S(x)} e^{-\lambda d(x,y')}$.
We call this a \emph{sparse discrete-Laplace mechanism}. Its sparsity is governed by the cardinality $|S(x)|$. In fact, we assume that $|S(x)|$ is small relative to $|\mathcal{Y}|$, i.e., $|S(x)|\ll|\mathcal{Y}|$.
\end{definition}

\subsection{Exact privacy characterization}

\begin{theorem}
\label{thm:pure}
The sparse discrete-Laplace mechanism in Definition~\ref{def:sparse-family} is pure $\varepsilon$-LDP if and only if the following conditions hold:
\begin{enumerate}[label=(\roman*)]
    \item \textbf{Common-support condition:} for every $x,x'\in \X$,
    \[
    S(x)=S(x').
    \]
    \item \textbf{Bounded overlap privacy loss:} if all supports are equal to a common set $S\subseteq \Y$, then
    \[
    \sup_{x,x'\in \X}\sup_{y\in S}
    \left[
    \lambda\bigl(d(x',y)-d(x,y)\bigr)+\log\frac{Z_{x'}}{Z_x}
    \right]
    \le \varepsilon.
    \]
\end{enumerate}
Equivalently, under common support,
\[
\sup_{x,x'\in \X}\sup_{y\in S}
\left|
\lambda\bigl(d(x',y)-d(x,y)\bigr)+\log\frac{Z_{x'}}{Z_x}
\right|
\le \varepsilon.
\]
\end{theorem}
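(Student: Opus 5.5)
The plan is to reduce pure $\varepsilon$-LDP to a pointwise likelihood-ratio condition and then read off the two conditions. Since $\Y$ is discrete, taking $A=\{y\}$ shows that pure $\varepsilon$-LDP implies
\[
Q_\lambda(y\mid x)\le e^{\varepsilon}Q_\lambda(y\mid x')
\quad\text{for all } x,x'\in\X,\ y\in\Y .
\]
Conversely, summing this pointwise inequality over $y\in A$ recovers the set-wise definition. So the whole proof reduces to characterizing when this pointwise inequality holds for all triples $(x,x',y)$.

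\textbf{Necessity of common support.} Suppose $S(x)\neq S(x')$ for some $x,x'$. Then, possibly after swapping the roles of $x$ and $x'$, there is a $y\in S(x)\setminus S(x')$. At this $y$ we have $Q_\lambda(y\mid x)=e^{-\lambda d(x,y)}/Z_x>0$, because the kernel is strictly positive and $Z_x$ is finite and positive. On the other hand $Q_\lambda(y\mid x')=0$. The pointwise inequality then fails for every finite $\varepsilon$. Hence condition (i) is necessary.

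\textbf{The overlap condition.} Now assume all supports equal a common set $S$. For $y\notin S$ both sides of the pointwise inequality vanish, so nothing needs checking. For $y\in S$ both probabilities are positive, so we may take logarithms:
\[
\log\frac{Q_\lambda(y\mid x)}{Q_\lambda(y\mid x')}
=\lambda\bigl(d(x',y)-d(x,y)\bigr)+\log\frac{Z_{x'}}{Z_x}.
\]
Requiring this to be at most $\varepsilon$ for all $x,x'\in\X$ and $y\in S$ is exactly condition (ii). So under common support, (ii) is equivalent to the pointwise inequality, and hence to pure $\varepsilon$-LDP. This proves both directions of the main claim.

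\textbf{The absolute-value form.} Exchanging $x$ and $x'$ in the log-ratio above negates it. Therefore the supremum over ordered pairs $(x,x')$ of the signed quantity equals the supremum of its absolute value, which gives the stated equivalent form.

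The only step requiring care is the support-mismatch argument. It needs the kernel to be strictly positive and $Z_x$ to be finite and positive; both hold because $S(x)$ is nonempty and $d$ is finite-valued. I expect no real obstacle beyond this.
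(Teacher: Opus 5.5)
Your proposal is correct and follows essentially the same route as the paper: a support-mismatch point $y\in S(x)\setminus S(x')$ makes the likelihood ratio infinite, and under common support the log-ratio is exactly $\lambda\bigl(d(x',y)-d(x,y)\bigr)+\log(Z_{x'}/Z_x)$, with the absolute-value form obtained by swapping $x$ and $x'$. You are somewhat more careful than the paper's proof, since you make explicit the singleton-set reduction of the set-wise definition to the pointwise ratio condition and the swap needed to place a mismatch point in $S(x)\setminus S(x')$.
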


\begin{proof}
Suppose there exist $x,x'\in \X$ and $y\in S(x)\setminus S(x')$. Then
\[
Q_\lambda(y\mid x)>0,
\qquad
Q_\lambda(y\mid x')=0,
\]
so the ratio $Q_\lambda(y\mid x)/Q_\lambda(y\mid x')$ is infinite. Hence pure $\varepsilon$-LDP fails. Therefore common support is necessary.

Now assume $S(x)=S$ for all $x\in \X$. For any $y\in S$,
\[
\frac{Q_\lambda(y\mid x)}{Q_\lambda(y\mid x')}
=
\exp\!\bigl(-\lambda d(x,y)+\lambda d(x',y)\bigr)\cdot \frac{Z_{x'}}{Z_x}.
\]
Taking logarithms gives the pointwise privacy loss
\[
L(x,x';y)=\lambda\bigl(d(x',y)-d(x,y)\bigr)+\log\frac{Z_{x'}}{Z_x}.
\]
Pure $\varepsilon$-LDP is equivalent to $L(x,x';y)\le \varepsilon$ for all $x,x',y$, which proves the first formula. Interchanging $x$ and $x'$ yields the symmetric version.
\end{proof}

\begin{remark}
\label{rem:pure-impossible}
Theorem~\ref{thm:pure} shows that sparsity cannot be encoded through genuinely input-dependent supports in the pure local privacy regime. Under pure $\varepsilon$-LDP, all support sets must coincide and hence become input-dependent; otherwise the privacy loss is infinite on support-mismatch points.
\end{remark}
\begin{corollary}
    Let $d(x,y)=|x-y|$ and $D=\max_{x,x'}|x-x'|$. A sufficient condition for bounded overlap privacy is to have
    \begin{align*}
        \lambda D+ \log\frac{Z_{x'}}{Z_x} \le \varepsilon.
    \end{align*}
    Furthermore, if $Z_{x'}=Z_{x}$ for all $x$ and $x'$, this reduces to: 
    \begin{align*}
        \lambda D \le \varepsilon.
    \end{align*}
\end{corollary}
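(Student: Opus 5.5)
The plan is to bound the pointwise privacy loss on the overlap region directly. Fix any pair $x,x'\in\X$ and any output $y$ that is admissible under both inputs, i.e.\ $y\in S(x)\cap S(x')$; under common support this is just $y\in S$. Exactly as in the proof of Theorem~\ref{thm:pure}, the likelihood ratio on such $y$ is
\[
\frac{Q_\lambda(y\mid x)}{Q_\lambda(y\mid x')}=\exp\!\bigl(\lambda(d(x',y)-d(x,y))\bigr)\frac{Z_{x'}}{Z_x}.
\]
This computation uses only that both probabilities are positive, so it does not need $S(x)=S(x')$. Hence the overlap privacy loss is again $L(x,x';y)=\lambda\bigl(d(x',y)-d(x,y)\bigr)+\log\frac{Z_{x'}}{Z_x}$. ``Bounded overlap privacy'' means $\sup_{x,x'}\sup_{y\in S(x)\cap S(x')}L(x,x';y)\le\varepsilon$, which is condition (ii) of Theorem~\ref{thm:pure} in the common-support case.

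The key step is to remove the dependence on $y$ using the metric structure of $d(x,y)=|x-y|$. By the reverse triangle inequality on $\Z$,
\[
|x'-y|-|x-y|\le |x'-x|\le D
\]
for every $y$, whether or not it lies in any particular support. Multiplying by $\lambda>0$ gives $L(x,x';y)\le \lambda D+\log\frac{Z_{x'}}{Z_x}$ uniformly over the overlap. If the hypothesis $\lambda D+\log(Z_{x'}/Z_x)\le\varepsilon$ holds for every pair $x,x'$ (which is how I will read the statement), then taking suprema over $y$ and then over $(x,x')$ yields bounded overlap privacy. In the common-support setting, Theorem~\ref{thm:pure} then upgrades this to pure $\varepsilon$-LDP. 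For unequal supports the same bound controls only the excess-loss term on the overlap; the support-leakage mass on $S(x)\setminus S(x')$ is a separate quantity and is not addressed by this corollary.

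For the second claim, if $Z_{x'}=Z_x$ for all $x,x'$, then $\log(Z_{x'}/Z_x)=0$ and the condition reduces to $\lambda D\le\varepsilon$. This happens, for example, for translation-invariant supports $S(x)=x+T$ with a common offset set $T$.

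I expect no serious obstacle, since the argument is a one-line triangle-inequality bound. The only point needing care is bookkeeping: the quantifier over $(x,x')$ in the hypothesis must be universal, and I must state explicitly that $y$ ranges over the overlap, where the ratio is finite. The bound is also not tight in general, because $\log(Z_{x'}/Z_x)$ can be negative for some pairs. That is harmless here, since only sufficiency is claimed.
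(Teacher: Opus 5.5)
Your proposal is correct and matches the paper's proof. Both bound the overlap privacy loss $L(x,x';y)$ from Theorem~\ref{thm:pure} by the reverse triangle inequality $|x'-y|-|x-y|\le |x-x'|\le D$, and drop the $\log(Z_{x'}/Z_x)$ term when all normalizers coincide; your extra bookkeeping (reading the hypothesis as holding for every pair $(x,x')$ and restricting $y$ to the overlap) makes explicit what the paper's one-line argument leaves implicit.
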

\begin{proof}
    The proof follows by Theorem 1 and reverse triangle inequality. For all $x$, $x'$ and $y$, we have
    \begin{align*}
        |x'-y|-|x-y|\le |x-x'|.
    \end{align*}
\end{proof}
\begin{definition}[Ordered privacy defect]\label{def:ordered-defect}
For the sparse discrete-Laplace family, define the ordered privacy defect between $x,x'\in\X$ by
\[
\Delta_{\varepsilon}(x,x'):=\sum_{y\in \Y}\pos{Q_\lambda(y\mid x)-e^{\varepsilon}Q_\lambda(y\mid x')}.
\]
\end{definition}

\begin{theorem}
\label{thm:approx-general}
The sparse discrete-Laplace mechanism in Definition~\ref{def:sparse-family} is $(\varepsilon,\delta)$-LDP if and only if
\[
\sup_{x,x'\in \X}\Delta_{\varepsilon}(x,x')\le \delta.
\]
Moreover,
\begin{align*}
\Delta_{\varepsilon}(x,x')
&=
\sum_{y\in S(x)\setminus S(x')}
\frac{e^{-\lambda d(x,y)}}{Z_x}
\\
&\qquad+
\sum_{y\in S(x)\cap S(x')}
\pos{\frac{e^{-\lambda d(x,y)}}{Z_x}-e^{\varepsilon}\frac{e^{-\lambda d(x',y)}}{Z_{x'}}}.
\end{align*}
\end{theorem}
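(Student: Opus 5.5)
The plan has two parts: first reduce $(\varepsilon,\delta)$-LDP to the hockey-stick bound, then compute the hockey-stick sum by partitioning $\Y$ according to the supports.

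\textbf{Step 1 (equivalence).} I would invoke the hockey-stick characterization from the Preliminaries directly: since $\Y$ is discrete, $(\varepsilon,\delta)$-LDP holds iff $\sum_{y}\pos{Q_\lambda(y\mid x)-e^{\varepsilon}Q_\lambda(y\mid x')}\le\delta$ for all ordered pairs $x,x'$. By Definition~\ref{def:ordered-defect} this sum is exactly $\Delta_\varepsilon(x,x')$, so the condition is $\sup_{x,x'}\Delta_\varepsilon(x,x')\le\delta$.

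For completeness I would also recall why the characterization holds. For any $A\subseteq\Y$,
\[
Q(A\mid x)-e^{\varepsilon}Q(A\mid x')=\sum_{y\in A}\bigl(Q(y\mid x)-e^{\varepsilon}Q(y\mid x')\bigr)\le \Delta_\varepsilon(x,x').
\]
Equality is attained at $A^*=\{y: Q(y\mid x)>e^{\varepsilon}Q(y\mid x')\}$. Hence the worst-case set gives exactly the hockey-stick sum. The sums converge absolutely because $Q(\cdot\mid x)$ and $Q(\cdot\mid x')$ are probability mass functions.

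\textbf{Step 2 (decomposition).} Fix $x,x'$ and partition $\Y$ into three parts:
\begin{itemize}
\item $S(x)\setminus S(x')$: here $Q_\lambda(y\mid x')=0$, so the summand is $\pos{e^{-\lambda d(x,y)}/Z_x}=e^{-\lambda d(x,y)}/Z_x$, because it is strictly positive (this is the support-leakage term);
\item $S(x)\cap S(x')$: both densities are given by the kernel formula, which yields the overlap term exactly as written;
\item $\Y\setminus S(x)$: here $Q_\lambda(y\mid x)=0$, so the summand is $\pos{-e^{\varepsilon}Q_\lambda(y\mid x')}=0$.
\end{itemize}
Adding the three contributions gives the stated formula.

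\textbf{Main obstacle.} There is no substantial obstacle; the argument is bookkeeping. The only point needing care is the attainment argument in Step 1, in particular checking that the optimal set $A^*$ is a legitimate event. This is automatic for a discrete $\Y$, where every subset is measurable. It is also worth noting that $Z_x>0$ because $S(x)$ is nonempty, so every quantity in the formula is well defined.
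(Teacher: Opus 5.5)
Your proposal is correct and follows essentially the same route as the paper: you reduce $(\varepsilon,\delta)$-LDP to the hockey-stick sum via the maximizing set $A^*=\{y: Q_\lambda(y\mid x)>e^{\varepsilon}Q_\lambda(y\mid x')\}$, then split $\Y$ into $S(x)\setminus S(x')$, $S(x)\cap S(x')$, and $\Y\setminus S(x)$. Your remarks on absolute convergence and on $Z_x>0$ add care that the paper leaves implicit, but they do not change the argument.
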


\begin{proof}
For a discrete output alphabet, $(\varepsilon,\delta)$-LDP is equivalent to
\[
\sup_{A\subseteq \Y}\bigl(Q_\lambda(A\mid x)-e^{\varepsilon}Q_\lambda(A\mid x')\bigr)\le \delta
\qquad \text{for all }x,x'\in\X.
\]
For fixed $x,x'$, the maximizing set is
\[
A^*=\{y\in\Y:Q_\lambda(y\mid x)-e^{\varepsilon}Q_\lambda(y\mid x')>0\},
\]
so the supremum equals
\[
\sum_{y\in\Y}\pos{Q_\lambda(y\mid x)-e^{\varepsilon}Q_\lambda(y\mid x')}.
\]
This proves the first statement.

To obtain the explicit formula, split the sum into three regions: $S(x)\setminus S(x')$, $S(x)\cap S(x')$, and $\Y\setminus S(x)$. On $\Y\setminus S(x)$, the contribution is zero. On $S(x)\setminus S(x')$, we have $Q_\lambda(y\mid x')=0$, so the contribution is exactly $e^{-\lambda d(x,y)}/Z_x$. On $S(x)\cap S(x')$, both probabilities are positive and the contribution is the displayed positive part.
\end{proof}

\begin{remark}
\label{rem:decomposition}
Theorem~\ref{thm:approx-general} decomposes the privacy defect into two terms,
\[
\Delta_{\varepsilon}(x,x')=\delta_{\mathrm{supp}}(x,x')+\delta_{\mathrm{ov}}(x,x'),
\]
where
\[
\delta_{\mathrm{supp}}(x,x'):=Q_\lambda\bigl(S(x)\setminus S(x')\mid x\bigr)
\]
is the support-leakage term and
\[
\delta_{\mathrm{ov}}(x,x'):=
\sum_{y\in S(x)\cap S(x')}
\pos{\frac{e^{-\lambda d(x,y)}}{Z_x}-e^{\varepsilon}\frac{e^{-\lambda d(x',y)}}{Z_{x'}}}
\]
is the overlap excess-loss term. Thus the $\delta$-budget has a precise interpretation: it accounts both for outputs available only under one input and for overlap points where the privacy loss exceeds $\varepsilon$.
\end{remark}

\begin{definition}[Pointwise privacy loss on the overlap]\label{def:pointwise-loss}
For $y\in S(x)\cap S(x')$, define
\[
L(x,x';y):=\log\frac{Q_\lambda(y\mid x)}{Q_\lambda(y\mid x')}=\lambda\bigl(d(x',y)-d(x,y)\bigr)+\log\frac{Z_{x'}}{Z_x}.
\]
Then the overlap contribution can be rewritten as
\[
\delta_{\mathrm{ov}}(x,x')
=
\sum_{y\in S(x)\cap S(x'):L(x,x';y)>\varepsilon}
Q_\lambda(y\mid x)\bigl(1-e^{\varepsilon-L(x,x';y)}\bigr).
\]
\end{definition}

\subsection{Radius truncation and sparsity tradeoffs}

\begin{definition}[Radius-truncated mechanism]\label{def:radius-r}
Let $r\in\mathbb N$ and define
\[
S_r(x):=\{y\in \Y:|x-y|\le r\}.
\]
The radius-truncated sparse discrete-Laplace mechanism is
\begin{align*}
Q_{\lambda,r}(y\mid x)&=\frac{e^{-\lambda |x-y|}}{C_r}\1\{|x-y|\le r\},
\\
C_r&:=\sum_{k=-r}^{r}e^{-\lambda|k|}=1+2\sum_{k=1}^{r} e^{-\lambda k}.
\end{align*}
The support size is
\[
s=|S_r(x)|=2r+1.
\]
\end{definition}

\begin{definition}[Separation-dependent privacy defect]\label{def:delta-h}
For $h\ge 0$, define
\[
\delta_h(\varepsilon,\lambda,r):=\sum_{y\in\Y}\pos{Q_{\lambda,r}(y\mid 0)-e^{\varepsilon}Q_{\lambda,r}(y\mid h)}.
\]
If $\X$ is finite, let
\[
D:=\max_{x,x'\in\X}|x-x'|
\]
denote its diameter.
\end{definition}

\begin{theorem}
\label{thm:radius-exact}
The radius-truncated sparse discrete-Laplace mechanism is $(\varepsilon,\delta)$-LDP if and only if
\[
\max_{0\le h\le D}\delta_h(\varepsilon,\lambda,r)\le \delta.
\]
Moreover, if $0\le h\le 2r$, then
\begin{align*}
\delta_h(\varepsilon,\lambda,r)
&=
\frac{1}{C_r}\sum_{k=-r}^{h-r-1} e^{-\lambda|k|}
\\
&\qquad+
\frac{1}{C_r}\sum_{k=h-r}^{r}
\pos{e^{-\lambda|k|}-e^{\varepsilon}e^{-\lambda|k-h|}},
\end{align*}
and if $h>2r$, then
\[
\delta_h(\varepsilon,\lambda,r)=1.
\]
\end{theorem}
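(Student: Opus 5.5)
The plan is to reduce everything to Theorem~\ref{thm:approx-general} and then exploit the translation invariance of the radius-truncated kernel. Implicitly we take $\Y$ large enough (e.g.\ $\Y=\Z$, or at least containing every $S_r(x)$ for $x\in\X$) so that $|S_r(x)|=2r+1$ and $Z_x=C_r$ for all $x$; this is already assumed in Definition~\ref{def:radius-r}.

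\textbf{Step 1: reduction to $\delta_h$.} By Theorem~\ref{thm:approx-general}, the mechanism is $(\varepsilon,\delta)$-LDP iff $\sup_{x,x'}\Delta_\varepsilon(x,x')\le\delta$. We first show that $\Delta_\varepsilon(x,x')$ depends only on $|x'-x|$.

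For translation, substitute $y=x+k$. This gives $Q_{\lambda,r}(x+k\mid x)=e^{-\lambda|k|}\1\{|k|\le r\}/C_r$ and $Q_{\lambda,r}(x+k\mid x')=Q_{\lambda,r}(k\mid x'-x)$. Hence $\Delta_\varepsilon(x,x')=\delta_{x'-x}$, where $\delta_h$ is defined by the same formula for signed $h$.

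For reflection, use $k\mapsto -k$. Since the kernel is symmetric, this gives $\delta_{-h}=\delta_h$.

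As $(x,x')$ ranges over $\X^2$, the value $|x'-x|$ takes only values in $\{0,\dots,D\}$. Therefore $\sup_{x,x'}\Delta_\varepsilon(x,x')\le\max_{0\le h\le D}\delta_h$, which gives the ``if'' direction.

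For ``only if'', we need each $h\le D$ to be realized. Read literally, only the separations actually occurring in $\X$ are needed. I would either assume $\X$ is an integer interval, so every $h\in\{0,\dots,D\}$ occurs, or note that the maximum is attained at a realized separation. To justify the latter, I would show that $\delta_h$ is nondecreasing in $h$; then the maximum is $\delta_D$, and $h=D$ is always realized. This monotonicity, or the interval assumption, is the main delicate point. I would try to prove monotonicity via the explicit formula below: increasing $h$ moves one more point into the leakage region, and that gain should dominate the change in the overlap terms. If this fails, I would state the interval assumption explicitly.

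\textbf{Step 2: explicit formula.} Take $x=0$ and $x'=h$, so $S_r(0)=\{-r,\dots,r\}$ and $S_r(h)=\{h-r,\dots,h+r\}$.

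When $0\le h\le 2r$, the two supports overlap. Their difference is $S_r(0)\setminus S_r(h)=\{-r,\dots,h-r-1\}$, which is empty when $h=0$, and their intersection is $\{h-r,\dots,r\}$. Plugging these sets into the decomposition of Theorem~\ref{thm:approx-general}, with $Z_0=Z_h=C_r$ and $d(h,k)=|k-h|$, gives exactly the displayed two sums; the common factor $1/C_r$ can be pulled out of the positive part.

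When $h>2r$, the supports are disjoint. The whole mass of $Q_{\lambda,r}(\cdot\mid 0)$ then lies in the leakage term, and the overlap term vanishes, so $\delta_h=1$.
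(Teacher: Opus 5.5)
Your argument is essentially the paper's: translation invariance reduces the condition to the separation-indexed defects $\delta_h$, and Theorem~\ref{thm:approx-general} applied to the explicit supports $S_r(0)$ and $S_r(h)$ gives the two-sum formula, with $\delta_h=1$ once the supports are disjoint. The realized-separation issue you raise in the ``only if'' direction is real, and the paper's proof skips it; your monotonicity route closes it. For $h+1\le 2r$, the index $k=h-r$ moves from the overlap, where it contributed at most $e^{-\lambda|h-r|}/C_r$, into the leakage, where it contributes exactly that amount. On the remaining overlap indices, the terms with $k\le h$ can only grow because $|k-h-1|=|k-h|+1$, and the terms with $k>h$ vanish for both separations. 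For $h+1>2r$ one has $\delta_{h+1}=1\ge\delta_h$ directly. Hence $\delta_h$ is nondecreasing, and $\max_{0\le h\le D}\delta_h=\delta_D$ is attained by the diameter pair of $\mathcal X$.
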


\begin{proof}
Translation-invariance gives
\[
Q_{\lambda,r}(y\mid x)=Q_{\lambda,r}(y-x\mid 0),
\]
so the privacy defect depends only on the separation $h=|x-x'|$. The general characterization in Theorem~\ref{thm:approx-general} therefore reduces to
\[
\max_{0\le h\le D}\delta_h(\varepsilon,\lambda,r)\le \delta.
\]

Now compare inputs $0$ and $h$. Their supports are
\begin{align*}
S_r(0)&=\{-r,-r+1,\dots,r\},\\
S_r(h)&=\{h-r,h-r+1,\dots,h+r\}.
\end{align*}
If $h>2r$, then these sets are disjoint, so choosing $A=S_r(0)$ yields
\[
Q_{\lambda,r}(A\mid 0)=1,
\qquad
Q_{\lambda,r}(A\mid h)=0,
\]
and therefore $\delta_h(\varepsilon,\lambda,r)=1$.

Assume now $0\le h\le 2r$. Then
\begin{align*}
S_r(0)\setminus S_r(h)&=\{-r,\dots,h-r-1\},
\\
S_r(0)\cap S_r(h)&=\{h-r,\dots,r\}.
\end{align*}
Applying Theorem~\ref{thm:approx-general} gives the stated formula.
\end{proof}

\begin{corollary}
\label{cor:pure-impossible}
If $|\X|\ge 2$, then the radius-truncated mechanism is not pure $\varepsilon$-LDP for any finite $\varepsilon$.
\end{corollary}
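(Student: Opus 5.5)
The plan is to derive the corollary directly from the common-support condition (i) of Theorem~\ref{thm:pure}, applied with $S(x)=S_r(x)$ and $\lambda d(x,y)=\lambda|x-y|$. The radius-truncated mechanism $Q_{\lambda,r}$ of Definition~\ref{def:radius-r} is a special case of the sparse discrete-Laplace family, so Theorem~\ref{thm:pure} says it is pure $\varepsilon$-LDP only if $S_r(x)=S_r(x')$ for all $x,x'\in\X$. It therefore suffices to exhibit two inputs whose supports differ, together with an output that lies in one support but not the other.

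Since $|\X|\ge 2$, pick distinct $x<x'$ in $\X\subseteq\Z$ and set $h=x'-x\ge 1$. The natural witness is the leftmost point of $S_r(x)$, namely $y=x-r$. It satisfies $|x-y|=r$, so $y\in S_r(x)$. It also satisfies $|x'-y|=h+r>r$, so $y\notin S_r(x')$. Consequently
\[
Q_{\lambda,r}(y\mid x)=e^{-\lambda r}/C_r>0,
\qquad
Q_{\lambda,r}(y\mid x')=0 .
\]
With $A=\{y\}$, the pure LDP inequality $Q(A\mid x)\le e^{\varepsilon}Q(A\mid x')$ fails for every finite $\varepsilon$. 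This is exactly the support-mismatch argument in the proof of Theorem~\ref{thm:pure}.

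The only point needing care is that $y=x-r$ actually belongs to the output alphabet $\Y$, because $S_r(x)$ is defined as a subset of $\Y$. I will handle this by noting the standing convention implicit in Definition~\ref{def:radius-r}: the constant $C_r$ is taken as the full sum $\sum_{k=-r}^{r}e^{-\lambda|k|}$, which presumes $x+k\in\Y$ for all $|k|\le r$. If instead only part of that range were in $\Y$, I would take as witness whichever endpoint of $S_r(x)\,\triangle\, S_r(x')$ lies in $\Y$. Either way the supports differ, condition (i) of Theorem~\ref{thm:pure} is violated, and the corollary follows.

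As a consistency check, the same conclusion appears in Theorem~\ref{thm:radius-exact}. For $1\le h\le 2r$, its support-leakage term $\frac{1}{C_r}\sum_{k=-r}^{h-r-1}e^{-\lambda|k|}$ is strictly positive, so $\delta_h>0$ and $\delta=0$ is unattainable. For $h>2r$ the theorem gives $\delta_h=1$.
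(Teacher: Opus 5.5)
Your proposal is correct and follows the paper's own route: the paper's proof notes that distinct inputs have distinct supports and invokes the common-support condition (i) of Theorem~\ref{thm:pure}, and your explicit witness $y=x-r\in S_r(x)\setminus S_r(x')$ simply makes that one-line argument concrete. Your reading that $x-r\in\Y$ is implicit in the full normalizer $C_r$ matches the convention the paper itself relies on when it asserts that the supports differ.
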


\begin{proof}
Distinct inputs have distinct supports, so pure privacy fails by Theorem~\ref{thm:pure}.
\end{proof}

\begin{definition}[$s$-sparse discrete-Laplace mechanism]\label{def:s-sparse}
For odd integers $s\ge 1$, let
\[
t:=\frac{s-1}{2}.
\]
Define
\[
Q_{\lambda,s}(y\mid x)=\frac{e^{-\lambda|x-y|}}{C_t}\1\{|x-y|\le t\},
\
C_t=1+2\sum_{j=1}^{t} e^{-\lambda j}.
\]
Then $Q_{\lambda,s}$ is the radius-truncated mechanism with support cardinality $s$.
\end{definition}

\begin{definition}[Privacy range]\label{def:privacy-range}
Let $H\ge 0$. We say that $Q$ is $(\varepsilon,\delta)$-LDP on range $H$ if for all $x,x'\in\X$ such that $|x-x'|\le H$ and all measurable $A\subseteq\Y$,
\[
Q(A\mid x)\le e^{\varepsilon}Q(A\mid x')+\delta.
\]
When $H=D:=\diam(\X)$, this reduces to the usual all-pairs notion.
\end{definition}

\begin{theorem}
\label{thm:feasibility}
Fix a privacy range $H\ge 0$. If there exist $x,x'\in\X$ with $|x-x'|\le H$ and
\[
|x-x'|>s-1,
\]
then
\[
\sup_{A\subseteq\Y}\bigl(Q_{\lambda,s}(A\mid x)-e^{\varepsilon}Q_{\lambda,s}(A\mid x')\bigr)=1.
\]
Consequently, nontrivial $(\varepsilon,\delta)$-LDP on range $H$ with $\delta<1$ cannot be obtained unless
\[
s\ge H+1.
\]
\end{theorem}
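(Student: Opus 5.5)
The plan is to reduce the statement to disjointness of supports, exactly as in the case $h>2r$ of Theorem~\ref{thm:radius-exact}. Write $t=(s-1)/2$, so that $Q_{\lambda,s}(\cdot\mid x)$ is supported on $S_t(x)=\{y:|x-y|\le t\}$. Fix $x,x'$ with $|x-x'|\le H$ and $h:=|x-x'|>s-1=2t$.

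The first step is to check that $S_t(x)\cap S_t(x')=\emptyset$. If some $y$ lay in both supports, the triangle inequality would give $|x-x'|\le |x-y|+|y-x'|\le 2t=s-1$, which is a contradiction.

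The second step establishes the supremum. Take $A=S_t(x)$. Then $Q_{\lambda,s}(A\mid x)=1$, because the mechanism is normalized on its support, and $Q_{\lambda,s}(A\mid x')=0$, by disjointness. Hence the supremum over $A$ is at least $1$. On the other hand, for every $A$,
\[
Q_{\lambda,s}(A\mid x)-e^{\varepsilon}Q_{\lambda,s}(A\mid x')\le Q_{\lambda,s}(A\mid x)\le 1,
\]
so the supremum equals $1$. Equivalently, by the computation in the proof of Theorem~\ref{thm:approx-general}, the defect reduces to the support-leakage term $Q_{\lambda,s}(S_t(x)\setminus S_t(x')\mid x)=1$, and the overlap term vanishes.

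For the consequence, suppose $Q_{\lambda,s}$ is $(\varepsilon,\delta)$-LDP on range $H$ with $\delta<1$, and suppose $s<H+1$, i.e.\ $H>s-1$. We then need a pair $x,x'\in\X$ with $s-1<|x-x'|\le H$. Applying the first part to that pair gives $\delta\ge 1$, a contradiction, so $s\ge H+1$.

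The only delicate point, and the main obstacle, is this existence step: it requires some assumption on $\X$. The natural one is that $\X$ is an integer interval, or more generally that the separations realized in $\X$ include an integer in $(s-1,H]$. I would state the contrapositive exactly as the hypothesis of the theorem provides it: whenever such a pair exists, $\delta<1$ is impossible. I would then remark that when $\X$ is an integer interval with $D\ge H$, the pair exists precisely when $H>s-1$, and this yields the threshold $s\ge H+1$. The rest of the argument is immediate.
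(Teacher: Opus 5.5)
Your proof is essentially the paper's: both show $S(x)\cap S(x')=\varnothing$ when $|x-x'|>2t=s-1$, then take $A=S(x)$ to get a defect of $1$. Two of your additions are correct refinements of points the paper leaves implicit: the explicit upper bound $Q_{\lambda,s}(A\mid x)-e^{\varepsilon}Q_{\lambda,s}(A\mid x')\le 1$, and your observation that the ``consequently'' part needs a pair in $\X$ with separation in $(s-1,H]$ (e.g.\ $\X$ an integer interval of diameter at least $H$, with $H$ an integer).
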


\begin{proof}
The support of $Q_{\lambda,s}(\cdot\mid x)$ is
\[
S(x)=\{y:|x-y|\le t\},
\qquad t=\frac{s-1}{2}.
\]
If $|x-x'|>2t=s-1$, then $S(x)\cap S(x')=\varnothing$. Taking $A=S(x)$ gives
\[
Q_{\lambda,s}(A\mid x)=1,
\qquad
Q_{\lambda,s}(A\mid x')=0,
\]
so the privacy defect equals $1$.
\end{proof}

\begin{remark}
The support set $S(x)$ should be interpreted as the collection of outputs that are plausible surrogates for the private symbol $x$. Its sparsity is natural for three reasons. First, it respects the local geometry of the problem by keeping probability mass near $x$. Second, it reduces distortion by removing far-away outputs that contribute little to utility. Third, under $(\varepsilon,\delta)$-LDP, support mismatch is not forbidden but instead paid for by the $\delta$-budget, making sparse supports a natural design degree of freedom.
\end{remark}

\begin{theorem}
\label{thm:clean-bound}
Fix a privacy range $H\ge 0$. Suppose that
\[
\lambda H\le \varepsilon
\qquad\text{and}\qquad
s\ge 2H+1.
\]
Then for every $0\le h\le H$,
\[
\delta_h(\varepsilon,\lambda,s)=\frac{1}{C_t}\sum_{j=t-h+1}^{t} e^{-\lambda j},
\qquad t=\frac{s-1}{2},
\]
and hence
\begin{align*}
\delta^*(\varepsilon,\lambda,s;H):=\max_{0\le h\le H}\delta_h(\varepsilon,\lambda,s)
&\le
\frac{H e^{-\lambda(t-H+1)}}{C_t}\\
&\le H e^{-\lambda(t-H+1)}.
\end{align*}
\end{theorem}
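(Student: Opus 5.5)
The plan is to start from the exact formula of Theorem~\ref{thm:radius-exact}, applied with radius $t=(s-1)/2$ and separation $0\le h\le H$. The hypothesis $s\ge 2H+1$ gives $t\ge H\ge h$, so $h\le 2t$ and the finite-overlap branch of that theorem applies. The defect then splits into the support-leakage sum over $k\in\{-t,\dots,h-t-1\}$ and the overlap sum over $k\in\{h-t,\dots,t\}$. The whole argument rests on one observation: under $\lambda H\le\varepsilon$, the overlap term vanishes identically.

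The first step is to kill the overlap term. For each $k$ in the overlap, the reverse triangle inequality gives $|k-h|-|k|\le h$, so
\[
e^{-\lambda|k|}-e^{\varepsilon}e^{-\lambda|k-h|}=e^{-\lambda|k|}\bigl(1-e^{\varepsilon-\lambda(|k-h|-|k|)}\bigr)\le e^{-\lambda|k|}\bigl(1-e^{\varepsilon-\lambda h}\bigr)\le 0,
\]
because $\lambda h\le\lambda H\le\varepsilon$. Every positive part is therefore zero, and this is the same reasoning as in the Corollary to Theorem~\ref{thm:pure}, now with $Z_x=Z_{x'}=C_t$ by translation invariance.

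The second step is to compute the leakage term exactly. The indices $k=-t,\dots,h-t-1$ are all $\le -1$, since $h\le t$. Hence $|k|=-k$ runs over $\{t-h+1,\dots,t\}$, and substituting $j=-k$ gives the claimed identity $\delta_h=C_t^{-1}\sum_{j=t-h+1}^{t}e^{-\lambda j}$. For $h=0$ the sum is empty, which is consistent with the formula.

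The third step is the bound. The sum has $h$ terms, and each is at most $e^{-\lambda(t-h+1)}$, the term with the smallest exponent. Since $h\le H$, we have $h\,e^{-\lambda(t-h+1)}\le H e^{-\lambda(t-H+1)}$, because $t-h+1\ge t-H+1$. Taking the maximum over $h$ gives the first inequality. The second follows from $C_t\ge 1$.

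The only delicate point is the sign bookkeeping in the first step, namely making sure that $k$ ranges over the overlap so that the triangle-inequality bound is used in the right direction. Everything else is direct substitution.
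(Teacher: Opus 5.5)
Your proposal is correct and follows essentially the same route as the paper: apply Theorem~\ref{thm:radius-exact} with radius $t$, then use the reverse triangle inequality with equal normalizers $C_t$ and $\lambda H\le\varepsilon$ to make every overlap positive part vanish, and finally evaluate and bound the remaining support-leakage sum. You also supply details the paper leaves implicit: that $t\ge H\ge h$ places you in the $h\le 2t$ branch, the reindexing $j=-k$ over negative leakage indices, and the termwise bound $h\,e^{-\lambda(t-h+1)}\le H e^{-\lambda(t-H+1)}$.
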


\begin{proof}
For separation $h\le H$, the pointwise privacy loss on the overlap is
\[
L_h(k)=\log\frac{Q_{\lambda,s}(k\mid 0)}{Q_{\lambda,s}(k\mid h)}
=\lambda\bigl(|k-h|-|k|\bigr),
\]
since the normalizing constants are equal. By the reverse triangle inequality,
\[
\bigl||k-h|-|k|\bigr|\le h\le H,
\]
hence
\[
L_h(k)\le \lambda H\le \varepsilon.
\]
So the overlap positive-part term vanishes. Only the support-leakage term remains, which yields the claimed sum and bound.
\end{proof}

\begin{corollary}
\label{cor:sufficient-s}
Under the conditions of Theorem~\ref{thm:clean-bound}, a sufficient condition for $(\varepsilon,\delta)$-LDP on range $H$ is
\[
s\ge 2H-1+\frac{2}{\lambda}\log\frac{H}{\delta}.
\]
In particular, if one chooses the largest overlap-compatible value
\[
\lambda=\frac{\varepsilon}{H},
\]
then it suffices that
\[
s\ge 2H-1+\frac{2H}{\varepsilon}\log\frac{H}{\delta}.
\]
\end{corollary}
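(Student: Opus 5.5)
The plan is to start from the bound in Theorem~\ref{thm:clean-bound}. Under its hypotheses ($\lambda H\le\varepsilon$ and $s\ge 2H+1$), it gives
\[
\delta^*(\varepsilon,\lambda,s;H)\le H e^{-\lambda(t-H+1)}, \qquad t=\tfrac{s-1}{2}.
\]
By the hockey-stick characterization of Theorem~\ref{thm:approx-general}, restricted to pairs with $|x-x'|\le H$ and reduced to separations $h\le H$ by translation invariance as in Theorem~\ref{thm:radius-exact}, $(\varepsilon,\delta)$-LDP on range $H$ holds as soon as $\delta^*(\varepsilon,\lambda,s;H)\le\delta$. It therefore suffices to ensure
\[
H e^{-\lambda(t-H+1)}\le \delta.
\]

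The key step is to solve this inequality for $s$. Taking logarithms gives $\lambda(t-H+1)\ge \log(H/\delta)$, which is equivalent to
\[
t\ge H-1+\tfrac{1}{\lambda}\log\tfrac{H}{\delta}.
\]
Substituting $t=(s-1)/2$ yields
\[
s\ge 2t_{\min}+1 = 2H-1+\tfrac{2}{\lambda}\log\tfrac{H}{\delta},
\]
which is exactly the stated condition. The second claim then follows by plugging $\lambda=\varepsilon/H$ into this expression. This choice is admissible because it satisfies $\lambda H\le\varepsilon$ with equality, so $\tfrac{2}{\lambda}=\tfrac{2H}{\varepsilon}$.

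The only delicate point is bookkeeping. The phrase ``under the conditions of Theorem~\ref{thm:clean-bound}'' means we keep the standing assumptions $s\ge 2H+1$ and $s$ odd. I would note that the derived inequality is imposed in addition to these assumptions, not in place of them. One might also check whether it already implies $s\ge 2H+1$ when $\tfrac{2}{\lambda}\log(H/\delta)\ge 2$, but this is not needed. Degenerate cases also need a remark. If $H=0$, the leakage is zero, so the statement is vacuous or trivially true. If $\delta\ge H$, the logarithm is nonpositive and the condition reduces to the standing assumptions.
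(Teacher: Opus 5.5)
Your proposal is correct and matches the paper's proof: you reduce to $He^{-\lambda(t-H+1)}\le\delta$ via Theorem~\ref{thm:clean-bound}, take logarithms, substitute $s=2t+1$, and then plug in $\lambda=\varepsilon/H$. Your extra bookkeeping is accurate and slightly more careful than the paper's version: you keep $s\ge 2H+1$ as a standing assumption, check that $\lambda=\varepsilon/H$ is admissible, and handle the degenerate cases.
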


\begin{proof}
By Theorem~\ref{thm:clean-bound}, it suffices that
\[
H e^{-\lambda(t-H+1)}\le \delta.
\]
Taking logarithms yields
\[
t\ge H-1+\frac{1}{\lambda}\log\frac{H}{\delta}.
\]
Substituting $s=2t+1$ gives the claim.
\end{proof}

\begin{definition}[Distortion moments]
For the $s$-sparse discrete-Laplace mechanism $Q_{\lambda,s}$, define the first and second distortion moments by
\begin{align*}
R_1(\lambda,s)&:=\E[|Y-x|\mid X=x],
\\
R_2(\lambda,s)&:=\E[(Y-x)^2\mid X=x].
\end{align*}
By translation-invariance, these quantities do not depend on $x$.
\end{definition}

\begin{proposition}
\label{prop:distortion}
Let $t=(s-1)/2$. Then
\begin{align*}
R_1(\lambda,s)&=\frac{2\sum_{j=1}^{t} j e^{-\lambda j}}{C_t},
\\
R_2(\lambda,s)&=\frac{2\sum_{j=1}^{t} j^2 e^{-\lambda j}}{C_t}.
\end{align*}
Moreover, for fixed $\lambda$, both $R_1(\lambda,s)$ and $R_2(\lambda,s)$ are nondecreasing in $s$.
\end{proposition}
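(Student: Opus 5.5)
The formulas come from direct computation; monotonicity needs one short argument.

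\textbf{Formulas.} By translation-invariance take $x=0$. Then $Y$ has law $Q_{\lambda,s}(k\mid 0)=e^{-\lambda|k|}/C_t$ for $|k|\le t$. Write
\[
\E[|Y|]=\frac{1}{C_t}\sum_{k=-t}^{t}|k|e^{-\lambda|k|}.
\]
Pair $k$ with $-k$ and drop the $k=0$ term, which vanishes. This gives $R_1=2\sum_{j=1}^t je^{-\lambda j}/C_t$. The same pairing with $k^2$ in place of $|k|$ gives $R_2$.

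\textbf{Monotonicity.} Increasing $s$ by $2$ means $t\mapsto t+1$. Let $f(j)$ denote either $j$ or $j^2$, and write $R=N_t/C_t$ with $N_t=2\sum_{j=1}^t f(j)e^{-\lambda j}$. Passing to $t+1$ adds the same weight $w:=2e^{-\lambda(t+1)}$ to the denominator and $f(t+1)w$ to the numerator. Hence
\[
R(t+1)=\frac{N_t+f(t+1)w}{C_t+w},
\]
which is a convex combination of $N_t/C_t$ and $f(t+1)$, with weights $C_t/(C_t+w)$ and $w/(C_t+w)$. So $R(t+1)\ge R(t)$ if and only if $f(t+1)\ge N_t/C_t$.

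It remains to check $f(t+1)\ge N_t/C_t$. The ratio $N_t/C_t$ is the mean of $f(|Y|)$ under a distribution supported on $\{0,\dots,t\}$. Since $f$ is nondecreasing on $\{0,1,\dots\}$, this mean is at most $f(t)\le f(t+1)$.

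The only subtle point is the convex-combination identity, which needs the new weight to enter numerator and denominator identically. It does so because the same factor $2e^{-\lambda(t+1)}$ appears in both. The boundary case $t=0$ gives $R=0$ and is trivially consistent. For the claim to make sense, $s$ ranges over odd integers, as in Definition~\ref{def:s-sparse}.
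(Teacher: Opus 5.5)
Your proof is correct and follows the paper's approach: the paper also obtains the formulas by symmetry, and justifies monotonicity only with the one-line remark that increasing $t$ adds probability mass at larger distance levels. Your convex-combination identity $R(t+1)=\frac{C_t}{C_t+w}R(t)+\frac{w}{C_t+w}f(t+1)$, together with $R(t)\le f(t)\le f(t+1)$, is the rigorous version of that remark, and it properly handles the renormalization by $C_{t+1}$ that the paper's sketch glosses over.
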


\begin{proof}
The formulas follow by symmetry. Increasing $t$ adds probability mass at larger distance levels, so both moments are nondecreasing.
\end{proof}

\begin{theorem}
\label{thm:optimal-sparsity}
Fix $\lambda>0$, a privacy range $H$, and target privacy parameters $(\varepsilon,\delta)$. Among all support sizes $s$ such that the $s$-sparse discrete-Laplace mechanism satisfies $(\varepsilon,\delta)$-LDP on range $H$, the distortion-optimal choice of $s$ is the smallest feasible support size.
\end{theorem}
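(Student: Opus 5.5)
The argument combines the monotonicity of distortion in Proposition~\ref{prop:distortion} with the fact that the feasible set of support sizes is nonempty and bounded below. Fix $\lambda,H,\varepsilon,\delta$. Let
\[
\mathcal F:=\{s\ \text{odd}: Q_{\lambda,s}\ \text{is } (\varepsilon,\delta)\text{-LDP on range } H\}.
\]
If $\mathcal F$ is empty there is nothing to prove, so assume it is nonempty. Then $\mathcal F$ has a least element $s_{\min}$, by well-ordering of the positive integers. If $\delta<1$, Theorem~\ref{thm:feasibility} further gives $s_{\min}\ge H+1$. This lower bound is not needed for optimality, but it shows the minimum is nontrivial.

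Next I would make the monotonicity in Proposition~\ref{prop:distortion} precise. By that proposition, $R_i(\lambda,s)$ is the expectation of $g_i(|Y-x|)$ with $g_1(j)=j$ and $g_2(j)=j^2$. Here $|Y-x|$ takes the value $0$ with weight $1/C_t$ and the value $j\in\{1,\dots,t\}$ with weight $2e^{-\lambda j}/C_t$. Passing from $t$ to $t+1$ adds the new level $t+1$, carrying weight proportional to $2e^{-\lambda(t+1)}$, and renormalizes. Hence the new distribution of $|Y-x|$ is a mixture
\[
(1-\alpha)P_t+\alpha\,\mathbb{1}_{\{t+1\}},\qquad \alpha=\frac{2e^{-\lambda(t+1)}}{C_{t+1}}.
\]
The point $t+1$ exceeds every atom of $P_t$, and $g_i$ is increasing. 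Therefore
\[
R_i(\lambda,2t+3)=(1-\alpha)R_i(\lambda,2t+1)+\alpha\, g_i(t+1)\ge R_i(\lambda,2t+1),
\]
with strict inequality because $g_i(t+1)>R_i(\lambda,2t+1)$. Iterating this shows $R_i(\lambda,\cdot)$ is nondecreasing, indeed strictly increasing, over odd $s$.

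Combining the two facts: for every $s\in\mathcal F$ we have $s\ge s_{\min}$, so $R_i(\lambda,s)\ge R_i(\lambda,s_{\min})$ for $i=1,2$. Thus $s_{\min}$ minimizes distortion over $\mathcal F$, and by strict monotonicity it is the unique minimizer.

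The main obstacle is conceptual rather than technical. The theorem does not require $\mathcal F$ to be an up-set, and indeed $\mathcal F$ need not be an interval. The argument above avoids that issue entirely because minimality is taken within $\mathcal F$ itself. For the remark that feasibility is monotone under the hypotheses of Theorem~\ref{thm:clean-bound} ($\lambda H\le\varepsilon$, $s\ge 2H+1$), I would add the following argument. In that regime $\delta_h=\frac{1}{C_t}\sum_{j=t-h+1}^{t}e^{-\lambda j}$. Increasing $t$ shifts the sum to smaller terms, each multiplied by $e^{-\lambda}$, while $C_t$ increases. Hence $\delta_h$ decreases in $s$, and once some $s$ is feasible in that regime, every larger $s$ is feasible as well. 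This makes the smallest feasible $s$ the unique distortion-optimal choice.
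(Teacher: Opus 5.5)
Your proof is correct and matches the paper's argument: the paper cites the monotonicity of $R_1,R_2$ in $s$ from Proposition~\ref{prop:distortion} and takes the least feasible $s$, and your mixture identity $R_i(\lambda,2t+3)=(1-\alpha)R_i(\lambda,2t+1)+\alpha\,g_i(t+1)$ supplies the rigorous monotonicity (indeed strict, which gives uniqueness) that the paper only asserts. One aside is wrong but harmless: $\mathcal F$ is in fact always an up-set, because for $0\le h\le 2t$ passing from $t$ to $t+1$ changes the unnormalized defect $C_t\,\delta_h$ by $e^{-\lambda(t+1)}-\min\{e^{-\lambda|h-t-1|},e^{\varepsilon-\lambda(t+1)}\}\le 0$ while $C_t$ increases (and $\delta_h=1$ for $h>2t$), so every $\delta_h$ is nonincreasing in $s$ even outside the regime of Theorem~\ref{thm:clean-bound}.
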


\begin{proof}
By Proposition~\ref{prop:distortion}, both $R_1(\lambda,s)$ and $R_2(\lambda,s)$ are nondecreasing in $s$, so the smallest feasible $s$ minimizes distortion.
\end{proof}

\section{Sparse Gaussian mechanisms}

\begin{definition}[Sparse Gaussian family]\label{def:sparse-gaussian-family}
Let $d:\X\times\Y\to [0,\infty)$ be a distortion function. For each $x\in\X$, let $S(x)\subseteq \Y$ be a nonempty support set. For $\sigma>0$, define
\[
G_\sigma(y\mid x)=\frac{e^{-d(x,y)^2/(2\sigma^2)}}{W_x}\1\{y\in S(x)\},
\]
where $W_x=\sum_{y'\in S(x)} e^{-d(x,y')^2/(2\sigma^2)}.$
We call this a \emph{sparse Gaussian mechanism}. Its sparsity is governed by the cardinality $|S(x)|$.
\end{definition}

\subsection{Exact privacy characterization}

\begin{theorem}
\label{thm:gaussian-pure}
The sparse Gaussian mechanism in Definition~\ref{def:sparse-gaussian-family} is pure $\varepsilon$-LDP if and only if the following conditions hold:
\begin{enumerate}[label=(\roman*)]
    \item for every $x,x'\in\X$, one has $S(x)=S(x')$;
    \item if all supports are equal to a common set $S\subseteq\Y$, then
    \[
    \sup_{x,x'\in\X}\sup_{y\in S}
    \left[
    \frac{d(x',y)^2-d(x,y)^2}{2\sigma^2}+\log\frac{W_{x'}}{W_x}
    \right]\le \varepsilon.
    \]
\end{enumerate}
\end{theorem}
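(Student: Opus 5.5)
The plan is to mirror the proof of Theorem~\ref{thm:pure}, replacing the discrete-Laplace kernel $e^{-\lambda d(x,y)}$ by the Gaussian kernel $e^{-d(x,y)^2/(2\sigma^2)}$. The argument uses only two facts: the kernel is strictly positive on $S(x)$, and the channel vanishes off $S(x)$. It does not depend on the particular form of the exponent. Since $\Y$ is discrete and $G_\sigma(A\mid x)=\sum_{y\in A}G_\sigma(y\mid x)$, pure $\varepsilon$-LDP is equivalent to the pointwise condition $G_\sigma(y\mid x)\le e^{\varepsilon}G_\sigma(y\mid x')$ for all $x,x'\in\X$ and $y\in\Y$. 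Summing the pointwise inequality over $A$ gives the set version, and taking $A=\{y\}$ gives the converse. I will state this reduction first.

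\emph{Necessity of common support.} Suppose $S(x)\neq S(x')$ for some $x,x'$. By swapping the roles of $x$ and $x'$ if needed, we may assume there is a $y\in S(x)\setminus S(x')$. Then $G_\sigma(y\mid x)=e^{-d(x,y)^2/(2\sigma^2)}/W_x>0$, because the exponential is positive and $W_x$ is a finite positive sum over the nonempty set $S(x)$. On the other hand, $G_\sigma(y\mid x')=0$. The pointwise inequality $G_\sigma(y\mid x)\le e^{\varepsilon}\cdot 0$ therefore fails for every finite $\varepsilon$, so condition (i) is necessary.

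\emph{Characterization under common support.} Now assume $S(x)=S$ for all $x$. Every $y\notin S$ gives $0\le 0$ and imposes no constraint. For $y\in S$ both probabilities are positive, and
\[
\log\frac{G_\sigma(y\mid x)}{G_\sigma(y\mid x')}=\frac{d(x',y)^2-d(x,y)^2}{2\sigma^2}+\log\frac{W_{x'}}{W_x}.
\]
The pointwise condition is thus exactly that this log-ratio is at most $\varepsilon$ for all $x,x'\in\X$ and $y\in S$, which is condition (ii). Conversely, (i) and (ii) together give the pointwise inequality everywhere, and hence pure $\varepsilon$-LDP. Exchanging $x$ and $x'$ yields the symmetric absolute-value form, as in Theorem~\ref{thm:pure}.

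There is no real obstacle. The only point needing care is to justify the reduction from sets to singletons and the positivity and finiteness of $W_x$. Both follow from the finiteness of $S(x)$, which holds because $|S(x)|$ is small, or at least because the normalizing sum converges.
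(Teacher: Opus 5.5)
Your proposal is correct and follows the same route as the paper, which defers to the proof of Theorem~\ref{thm:pure}: a support mismatch forces an infinite likelihood ratio, and under common support the log-ratio formula turns pure LDP into condition (ii). Your explicit reduction from sets to singletons and your check that $W_x$ is finite and positive fill in details the paper leaves implicit.
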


\begin{proof}
The proof is similar to that of Theorem~\ref{thm:pure}. Under common support,
\[
\frac{G_\sigma(y\mid x)}{G_\sigma(y\mid x')}=
\exp\!\left(\frac{d(x',y)^2-d(x,y)^2}{2\sigma^2}\right)\cdot \frac{W_{x'}}{W_x},
\]
which yields the stated privacy-loss formula.
\end{proof}

\begin{definition}[Ordered Gaussian privacy defect]
For the sparse Gaussian family, define
\[
\Delta^{\mathrm G}_{\varepsilon}(x,x'):=\sum_{y\in\Y}\pos{G_\sigma(y\mid x)-e^{\varepsilon}G_\sigma(y\mid x')}.
\]
\end{definition}

\begin{theorem}
\label{thm:gaussian-approx-general}
The sparse Gaussian mechanism in Definition~\ref{def:sparse-gaussian-family} is $(\varepsilon,\delta)$-LDP if and only if
\[
\sup_{x,x'\in\X}\Delta^{\mathrm G}_{\varepsilon}(x,x')\le \delta.
\]
Moreover,
\begin{align*}
\Delta^{\mathrm G}_{\varepsilon}(x,x')
&=
\sum_{y\in S(x)\setminus S(x')}
\frac{e^{-d(x,y)^2/(2\sigma^2)}}{W_x}
\\
&+\!\!\!\!\!\!\!
\sum_{y\in S(x)\cap S(x')}
\pos{\frac{e^{-d(x,y)^2/(2\sigma^2)}}{W_x}-e^{\varepsilon}\frac{e^{-d(x',y)^2/(2\sigma^2)}}{W_{x'}}}.
\end{align*}
\end{theorem}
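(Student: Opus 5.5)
The argument mirrors that of Theorem~\ref{thm:approx-general}; only the kernel changes. We proceed in two stages. First we reduce $(\varepsilon,\delta)$-LDP to a bound on the hockey-stick divergence. Then we evaluate that divergence by splitting the output alphabet according to the supports.

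\emph{Reduction.} Fix an ordered pair $x,x'\in\X$. The $(\varepsilon,\delta)$-LDP condition for this pair reads
\[
\sup_{A\subseteq\Y}\bigl(G_\sigma(A\mid x)-e^{\varepsilon}G_\sigma(A\mid x')\bigr)\le\delta .
\]
Since $\Y$ is discrete, the quantity inside the supremum equals $\sum_{y\in A}\bigl(G_\sigma(y\mid x)-e^{\varepsilon}G_\sigma(y\mid x')\bigr)$. This sum is maximized by keeping exactly the points with a positive summand, that is, by
\[
A^*=\{y:G_\sigma(y\mid x)>e^{\varepsilon}G_\sigma(y\mid x')\}.
\]
Adding any other point can only lower the sum, and removing a point of $A^*$ also lowers it. Hence the supremum equals $\Delta^{\mathrm G}_\varepsilon(x,x')$. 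Requiring the LDP inequality for all ordered pairs is therefore the same as $\sup_{x,x'}\Delta^{\mathrm G}_\varepsilon(x,x')\le\delta$, which gives the ``if and only if'' statement. This is also the hockey-stick characterization recorded in the preliminaries.

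\emph{Explicit formula.} We partition $\Y$ into the three disjoint regions $\Y\setminus S(x)$, $S(x)\setminus S(x')$, and $S(x)\cap S(x')$, and compute the summand on each.
\begin{itemize}
\item On $\Y\setminus S(x)$ we have $G_\sigma(y\mid x)=0$. The summand is then $[-e^{\varepsilon}G_\sigma(y\mid x')]_+=0$.
\item On $S(x)\setminus S(x')$ the indicator in $G_\sigma(\cdot\mid x')$ vanishes. The summand is then $G_\sigma(y\mid x)=e^{-d(x,y)^2/(2\sigma^2)}/W_x$, which is positive.
\item On the overlap $S(x)\cap S(x')$ both indicators equal one. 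The summand is exactly the displayed positive part.
\end{itemize}
Summing over the three regions yields the stated expression.

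No step is a genuine obstacle; the argument is essentially bookkeeping. Two points deserve care. First, we must check that $W_x>0$, so that $G_\sigma$ is well defined; this holds because $S(x)$ is nonempty and every Gaussian weight is strictly positive. Second, when justifying that $A^*$ attains the supremum, we must keep the ordered pair $(x,x')$ fixed. The LDP condition quantifies over all ordered pairs, so the supremum in the final statement is taken over both orders automatically, and no symmetrization is needed.
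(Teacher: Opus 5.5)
Your proposal is correct and follows the paper's proof. The paper simply defers to the proof of Theorem~\ref{thm:approx-general}, which uses the same maximizing set $A^*$ and the same three-region split of $\Y$. One wording fix: adding a point outside $A^*$ cannot \emph{increase} the sum. Points with zero summand leave it unchanged, so it is not true that such additions ``can only lower'' it.
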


\begin{proof}
The proof is the same as Theorem~\ref{thm:approx-general}, replacing the Laplace kernel by the Gaussian kernel.
\end{proof}

\subsection{Radius truncation and sparsity tradeoffs}

\begin{definition}[Radius-truncated sparse Gaussian mechanism]
Let $r\in\mathbb N$ and define
\begin{align*}
G_{\sigma,r}(y\mid x)&=\frac{e^{-(x-y)^2/(2\sigma^2)}}{\Gamma_r(\sigma)}\1\{|x-y|\le r\},
\\
\Gamma_r(\sigma)&:=\sum_{k=-r}^{r} e^{-k^2/(2\sigma^2)}.
\end{align*}
Its support size is $s=2r+1$.
\end{definition}

\begin{definition}[Separation-dependent Gaussian privacy defect]
For $h\ge 0$, define
\[
\delta_h^{\mathrm G}(\varepsilon,\sigma,r):=
\sum_{y\in\Y}\pos{G_{\sigma,r}(y\mid 0)-e^{\varepsilon}G_{\sigma,r}(y\mid h)}.
\]
\end{definition}

\begin{theorem}
\label{thm:gaussian-radius-exact}
The radius-truncated sparse Gaussian mechanism is $(\varepsilon,\delta)$-LDP if and only if
\[
\max_{0\le h\le D}\delta_h^{\mathrm G}(\varepsilon,\sigma,r)\le \delta.
\]
Moreover, if $0\le h\le 2r$, then
\begin{align*}
\delta_h^{\mathrm G}(\varepsilon,\sigma,r)
&=
\frac{1}{\Gamma_r(\sigma)}\sum_{k=-r}^{h-r-1} e^{-k^2/(2\sigma^2)}
\\
&+
\frac{1}{\Gamma_r(\sigma)}\sum_{k=h-r}^{r}
\pos{e^{-k^2/(2\sigma^2)}-e^{\varepsilon}e^{-(k-h)^2/(2\sigma^2)}},
\end{align*}
and if $h>2r$, then
\[
\delta_h^{\mathrm G}(\varepsilon,\sigma,r)=1.
\]
\end{theorem}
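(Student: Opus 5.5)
The proof mirrors that of Theorem~\ref{thm:radius-exact}, with the Laplace kernel replaced by the Gaussian kernel and Theorem~\ref{thm:gaussian-approx-general} used in place of Theorem~\ref{thm:approx-general}.

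\emph{Reduction to separations.} Since the kernel $e^{-(x-y)^2/(2\sigma^2)}$ and the support $\{y:|x-y|\le r\}$ depend only on $y-x$, we have $G_{\sigma,r}(y\mid x)=G_{\sigma,r}(y-x\mid 0)$, and the normalizer $\Gamma_r(\sigma)$ is the same for every $x$. (We implicitly take $\Y=\Z$, or at least that $\Y$ contains every window $[x-r,x+r]$, as the paper does in the Laplace case.) Substituting $y\mapsto y-x'$ in the hockey-stick sum turns $\Delta^{\mathrm G}_\varepsilon(x,x')$ into the same expression with inputs $0$ and $x-x'$.

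The kernel is even in $k$, because $(-k)^2=k^2$. Applying the reflection $y\mapsto -y$ therefore shows that separations $h$ and $-h$ give the same defect. Hence $\Delta^{\mathrm G}_\varepsilon(x,x')=\delta_h^{\mathrm G}(\varepsilon,\sigma,r)$ with $h=|x-x'|$. As $(x,x')$ ranges over $\X\times\X$, $h$ ranges over a subset of $\{0,\dots,D\}$, so Theorem~\ref{thm:gaussian-approx-general} gives the stated ``if and only if'' with $\max_{0\le h\le D}$. As in Theorem~\ref{thm:radius-exact}, this tacitly treats every $h\le D$ as realized; when $\X$ is an integer interval this is exact.

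\emph{Disjoint supports, $h>2r$.} The supports are $S_r(0)=\{-r,\dots,r\}$ and $S_r(h)=\{h-r,\dots,h+r\}$, which are disjoint. Then $G_{\sigma,r}(y\mid h)=0$ on $S_r(0)$, so the sum equals $\sum_{y\in S_r(0)}G_{\sigma,r}(y\mid 0)=1$. Equivalently, take $A=S_r(0)$ in the set form of the definition.

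\emph{Overlapping supports, $0\le h\le 2r$.} Here
\[
S_r(0)\setminus S_r(h)=\{-r,\dots,h-r-1\},\qquad S_r(0)\cap S_r(h)=\{h-r,\dots,r\}.
\]
Terms with $y\notin S_r(0)$ contribute $[0-e^\varepsilon(\cdot)]_+=0$. On the leakage set the contribution is $e^{-k^2/(2\sigma^2)}/\Gamma_r(\sigma)$. On the overlap, both normalizers equal $\Gamma_r(\sigma)$, which factors out of the positive part, and $d(h,k)^2=(k-h)^2$. Plugging these three regions into the formula of Theorem~\ref{thm:gaussian-approx-general} gives exactly the displayed expression.

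The only step needing care is the reduction to $h\ge0$ via symmetry and translation invariance, together with the assumption on $\Y$. Everything else is direct bookkeeping.
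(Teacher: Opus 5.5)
Your proposal is correct and follows the paper's route; the paper's proof simply defers to Theorem~\ref{thm:radius-exact}: translation invariance reduces everything to a separation $h$, the set $A=S_r(0)$ handles $h>2r$, and the three-region split from Theorem~\ref{thm:gaussian-approx-general} gives the formula for $0\le h\le 2r$. Your explicit use of the evenness of the kernel to pass from the signed difference to $h=|x-x'|$, and your caveats about $\Y$ and about which separations are realized in $\X$, make precise what the paper leaves implicit. One small slip: the shift $y\mapsto y-x'$ gives the ordered inputs $(x-x',0)$, whereas $y\mapsto y-x$ gives $(0,x'-x)$; your reflection step then finishes the reduction either way.
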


\begin{proof}
The proof is identical to Theorem~\ref{thm:radius-exact} after replacing the Laplace kernel with the Gaussian kernel.
\end{proof}

\begin{corollary}
If $|\X|\ge 2$, then the radius-truncated sparse Gaussian mechanism is not pure $\varepsilon$-LDP for any finite $\varepsilon$.
\end{corollary}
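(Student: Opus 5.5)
The plan mirrors the proof of Corollary~\ref{cor:pure-impossible} for the Laplace case: reduce to the common-support condition of Theorem~\ref{thm:gaussian-pure} and show that it fails whenever $|\X|\ge 2$. Throughout, the truncation set is $S_r(x)=\{y\in\Y:|x-y|\le r\}$, the radius-truncated Gaussian mechanism is the sparse Gaussian family of Definition~\ref{def:sparse-gaussian-family} with $d(x,y)=|x-y|$ and $S(x)=S_r(x)$, and we work, as the paper implicitly does, with $\Y\supseteq\Z$ (or at least $\Y$ containing the integer windows around each input).

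\textbf{Step 1: distinct inputs have distinct supports.} Take distinct $x,x'\in\X$ and assume without loss of generality that $x<x'$. Since $\X\subseteq\Z$, we have $x'-x\ge 1$. The point $y=x-r$ lies in $S_r(x)$ because $|x-y|=r$. It does not lie in $S_r(x')$ because $|x'-y|=x'-x+r\ge r+1$. Hence $S_r(x)\ne S_r(x')$.

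\textbf{Step 2: infinite privacy loss at that point.} At $y=x-r$ we have
\[
G_{\sigma,r}(y\mid x)=e^{-r^2/(2\sigma^2)}/\Gamma_r(\sigma)>0,
\qquad
G_{\sigma,r}(y\mid x')=0.
\]
Taking $A=\{y\}$ in the definition of pure $\varepsilon$-LDP would require $G_{\sigma,r}(y\mid x)\le e^{\varepsilon}\cdot 0=0$, which is false for every finite $\varepsilon$. Equivalently, condition (i) of Theorem~\ref{thm:gaussian-pure} is violated, so the mechanism is not pure $\varepsilon$-LDP.

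The only subtle point is the standing assumption on $\Y$ in Step 1. If $\Y$ were allowed to be an arbitrary discrete set, the windows $S_r(x)\cap\Y$ and $S_r(x')\cap\Y$ could in degenerate cases coincide. In that situation the argument needs $x-r\in\Y$ (or, symmetrically, $x'+r\in\Y$), and I would state this explicitly. It holds when $\Y=\Z$ or $\Y$ is a sufficiently large interval, which matches the paper's convention that $|S_r(x)|=2r+1$. No other obstacle is expected.
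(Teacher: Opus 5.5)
Your proof is correct and follows the paper's route: the paper simply asserts that distinct inputs have distinct supports and invokes Theorem~\ref{thm:gaussian-pure}, while you make this explicit with the witness $y=x-r\in S_r(x)\setminus S_r(x')$ and the singleton event $A=\{y\}$. Your caveat about $\Y$ is not an extra hypothesis, since the normalization $\Gamma_r(\sigma)=\sum_{k=-r}^{r}e^{-k^2/(2\sigma^2)}$ already presupposes that every window $\{x-r,\dots,x+r\}$ lies in $\Y$.
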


\begin{proof}
Distinct inputs have distinct supports, so pure privacy fails by Theorem~\ref{thm:gaussian-pure}.
\end{proof}

\begin{definition}[$s$-sparse Gaussian mechanism]
For odd integers $s\ge 1$, let
\[
t:=\frac{s-1}{2}.
\]
Define
\begin{align*}
G_{\sigma,s}(y\mid x)&=\frac{e^{-(x-y)^2/(2\sigma^2)}}{\Gamma_t(\sigma)}\1\{|x-y|\le t\},
\\
\Gamma_t(\sigma)&=\sum_{k=-t}^{t} e^{-k^2/(2\sigma^2)}.
\end{align*}
\end{definition}

\begin{theorem}
Fix a privacy range $H\ge 0$. Nontrivial $(\varepsilon,\delta)$-LDP on range $H$ with $\delta<1$ is impossible for the $s$-sparse radius-truncated Gaussian mechanism unless
\[
s\ge H+1.
\]
\end{theorem}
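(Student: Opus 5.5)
The plan mirrors the proof of Theorem~\ref{thm:feasibility}, with the Gaussian kernel in place of the Laplace kernel. Only the supports matter, and the kernel values play no role. I read the statement the same way as Theorem~\ref{thm:feasibility}: if some admissible pair $x,x'\in\X$ with $|x-x'|\le H$ has $|x-x'|>s-1$, then no $\delta<1$ can be achieved. Hence feasibility forces every in-range pair to satisfy $|x-x'|\le s-1$, and in particular (when the range $H$ is realized by some input pair) $s\ge H+1$.

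\textbf{Step 1: identify the supports.} With $t=(s-1)/2$, the support of $G_{\sigma,s}(\cdot\mid x)$ is exactly
\[
S(x)=\{y:|x-y|\le t\},
\]
since the Gaussian weight $e^{-(x-y)^2/(2\sigma^2)}$ is strictly positive there and the indicator vanishes elsewhere.

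\textbf{Step 2: disjointness.} If $|x-x'|>2t=s-1$, then by the triangle inequality no $y$ can satisfy both $|x-y|\le t$ and $|x'-y|\le t$. Therefore $S(x)\cap S(x')=\varnothing$.

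\textbf{Step 3: the defect equals one.} Take $A=S(x)$. Then $G_{\sigma,s}(A\mid x)=1$ and $G_{\sigma,s}(A\mid x')=0$, so
\[
G_{\sigma,s}(A\mid x)-e^{\varepsilon}G_{\sigma,s}(A\mid x')=1.
\]
Equivalently, one can read this off Theorem~\ref{thm:gaussian-radius-exact} with $r=t$ and $h=|x-x'|>2r$, which gives $\delta_h^{\mathrm G}=1$. By the translation invariance used in the proof of Theorem~\ref{thm:radius-exact}, the defect depends only on $h$. Any $(\varepsilon,\delta)$ guarantee on range $H$ therefore needs $\delta\ge 1$.

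\textbf{Step 4: contrapositive.} If $s<H+1$, i.e.\ $H>s-1$, then a pair at separation $H$ (or any in-range separation exceeding $s-1$) is disjointly supported, and Step 3 applies.

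\textbf{Main obstacle.} The computation itself is immediate. The only delicate point is the implicit assumption that some input pair actually attains a separation in $(s-1,H]$. I would state this as in Theorem~\ref{thm:feasibility}: for example, $\X$ contains a pair at distance $H$ (such as $\X$ an integer interval of diameter at least $H$). If no such pair exists, the condition is vacuous and the claim should be read as conditional on the existence of such a pair.
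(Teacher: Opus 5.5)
Your proposal is correct and follows the paper's proof, which notes that the truncated Gaussian has the same support geometry as the truncated Laplace family and then reuses the disjoint-support argument of Theorem~\ref{thm:feasibility} with $A=S(x)$. Your caveat is well placed: the conclusion $s\ge H+1$ needs some input pair at a separation in $(s-1,H]$ (e.g.\ a pair at distance exactly $H$), and the paper leaves this assumption implicit.
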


\begin{proof}
The support geometry is identical to that of the radius-truncated Laplace family, so the proof is the same as Theorem~\ref{thm:feasibility}.
\end{proof}

\begin{proposition}
\label{prop:gaussian-threshold}
Fix $0<h\le 2t$. Define
\[
\kappa_h:=\frac{h}{2}-\frac{\sigma^2\varepsilon}{h}.
\]
Then the overlap term in the Gaussian defect is positive only for indices $k<\kappa_h$.
\end{proposition}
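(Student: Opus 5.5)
The plan is to work directly with the overlap term of Theorem~\ref{thm:gaussian-radius-exact}. For each index $k$ in the overlap range $h-r\le k\le r$ (here $r=t$), the summand
\[
\pos{e^{-k^2/(2\sigma^2)}-e^{\varepsilon}e^{-(k-h)^2/(2\sigma^2)}}
\]
is positive if and only if $e^{-k^2/(2\sigma^2)}>e^{\varepsilon}e^{-(k-h)^2/(2\sigma^2)}$. Taking logarithms, which is legitimate since both sides are strictly positive, this is equivalent to the pointwise privacy loss exceeding $\varepsilon$:
\[
\frac{(k-h)^2-k^2}{2\sigma^2}>\varepsilon .
\]
This is the Gaussian analogue of the loss $L(x,x';y)$ in Definition~\ref{def:pointwise-loss}. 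The normalizers cancel because both rows use the same $\Gamma_t(\sigma)$ by translation invariance.

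The key algebraic step is to expand $(k-h)^2-k^2=h^2-2hk$, which is linear in $k$. The quadratic terms cancel, and this is the whole point of the argument. The condition becomes $h^2-2hk>2\sigma^2\varepsilon$. Since $h>0$ by hypothesis, we may divide by $2h$ without flipping the inequality, giving
\[
k<\frac{h}{2}-\frac{\sigma^2\varepsilon}{h}=\kappa_h .
\]
Hence a summand in the overlap term can be nonzero only when $k<\kappa_h$. Equivalently, every $k\ge\kappa_h$ contributes zero, which is the claim.

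There is no genuine obstacle here; the only points needing care are the sign of $h$ (which is why the proposition assumes $h>0$) and the strict inequality at the boundary $k=\kappa_h$, where the summand is exactly zero. I would close with a remark: the set of contributing indices is $\{k\in\Z: h-t\le k<\kappa_h\}$, which is empty when $\kappa_h\le h-t$. This sets up the later tradeoff analysis, where the dependence of $\kappa_h$ on $h^2$ relative to $\sigma^2\varepsilon$ produces the quadratic behavior mentioned in the abstract.
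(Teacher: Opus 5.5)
Your proposal is correct and follows essentially the same route as the paper: reduce positivity of the overlap summand to $\frac{(k-h)^2-k^2}{2\sigma^2}>\varepsilon$, expand $(k-h)^2-k^2=h^2-2hk$, and divide by $2h>0$ to obtain $k<\kappa_h$. You also make explicit two points the paper leaves implicit: the common normalizer $\Gamma_t(\sigma)$ cancels, and $h>0$ is what keeps the inequality direction when dividing.
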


\begin{proof}
For overlap index $k$, the overlap contribution is positive exactly when
\[
e^{-k^2/(2\sigma^2)}-e^{\varepsilon}e^{-(k-h)^2/(2\sigma^2)}>0,
\]
which is equivalent to
\[
\frac{(k-h)^2-k^2}{2\sigma^2}>\varepsilon.
\]
Since $(k-h)^2-k^2=h^2-2kh$, this becomes
\[
k<\frac{h}{2}-\frac{\sigma^2\varepsilon}{h}.
\]
\end{proof}

\begin{theorem}
\label{thm:gaussian-clean-bound}
Fix a privacy range $H\ge 0$ and let $t=(s-1)/2$. Suppose that
\[
\varepsilon\ge \frac{H(2t-H)}{2\sigma^2}
\qquad\text{and}\qquad
s\ge 2H+1.
\]
Then for every $0\le h\le H$,
\[
\delta_h^{\mathrm G}(\varepsilon,\sigma,s)=\frac{1}{\Gamma_t(\sigma)}\sum_{j=t-h+1}^{t} e^{-j^2/(2\sigma^2)},
\]
and hence
\begin{align*}
\delta^{\mathrm G,*}(\varepsilon,\sigma,s;H):=\max_{0\le h\le H}\delta_h^{\mathrm G}(\varepsilon,\sigma,s)
&\le \frac{H e^{-(t-H+1)^2/(2\sigma^2)}}{\Gamma_t(\sigma)}
\\&\le H e^{-(t-H+1)^2/(2\sigma^2)}.
\end{align*}
\end{theorem}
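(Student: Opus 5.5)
The plan is to specialize the exact formula of Theorem~\ref{thm:gaussian-radius-exact} with $r=t$, show that the overlap term vanishes under the stated hypothesis on $\varepsilon$, and then bound the surviving support-leakage term. Since $\X\subseteq\Z$, separations $h$ are integers. Since $s\ge 2H+1$ we have $t\ge H\ge h$, so in particular $0\le h\le 2t$. Hence Theorem~\ref{thm:gaussian-radius-exact} applies and gives
\[
\delta_h^{\mathrm G}=\frac{1}{\Gamma_t(\sigma)}\sum_{k=-t}^{h-t-1}e^{-k^2/(2\sigma^2)}+\frac{1}{\Gamma_t(\sigma)}\sum_{k=h-t}^{t}\pos{e^{-k^2/(2\sigma^2)}-e^{\varepsilon}e^{-(k-h)^2/(2\sigma^2)}}.
\]
The case $h=0$ is trivial, since both sums are zero because $\varepsilon\ge 0$, so I take $h\ge 1$.

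\textbf{Killing the overlap term.} By the computation in Proposition~\ref{prop:gaussian-threshold}, the summand at index $k$ is positive iff
\[
h^2-2kh>2\sigma^2\varepsilon.
\]
On the overlap range $k\in\{h-t,\dots,t\}$, the quantity $h^2-2kh$ is decreasing in $k$. It is therefore maximized at $k=h-t$, where it equals $h(2t-h)$. The map $h\mapsto h(2t-h)$ is nondecreasing on $[0,t]$, and $h\le H\le t$, so $h(2t-h)\le H(2t-H)\le 2\sigma^2\varepsilon$ by hypothesis. Hence every overlap summand is zero. This monotonicity check, which ensures the worst separation $h=H$ controls all smaller $h$, is the only step needing care; everything else is bookkeeping.

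\textbf{Leakage term and bounds.} In the remaining sum over $k=-t,\dots,h-t-1$, substitute $j=-k$, which ranges over $t-h+1,\dots,t$. This gives exactly
\[
\frac{1}{\Gamma_t(\sigma)}\sum_{j=t-h+1}^{t}e^{-j^2/(2\sigma^2)}.
\]
For the bound, note that the sum has $h\le H$ terms. Each index satisfies $j\ge t-h+1\ge t-H+1\ge 1>0$, so each term is at most $e^{-(t-H+1)^2/(2\sigma^2)}$, using that $j\mapsto e^{-j^2/(2\sigma^2)}$ is decreasing for $j>0$. Taking the maximum over $0\le h\le H$ gives the first inequality. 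Finally, $\Gamma_t(\sigma)\ge 1$ because it contains the $k=0$ term $e^0=1$, which gives the second inequality.
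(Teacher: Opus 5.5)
Your proof is correct and follows essentially the same route as the paper: specialize the exact radius-truncated formula, show the overlap loss $(h^2-2kh)/(2\sigma^2)$ is maximized at $k=h-t$ and bounded by $H(2t-H)/(2\sigma^2)\le\varepsilon$, and keep only the leakage sum. You also make explicit several steps the paper leaves implicit, all correctly: $h\le t\le 2t$, the separate $h=0$ case, monotonicity of $h\mapsto h(2t-h)$ on $[0,t]$, $j\ge 1$ for the kernel bound, and $\Gamma_t(\sigma)\ge 1$.
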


\begin{proof}
For fixed $h\le H$ and overlap index $k$, the pointwise Gaussian privacy loss is
\[
L_h^{\mathrm G}(k)=\log\frac{G_{\sigma,s}(k\mid 0)}{G_{\sigma,s}(k\mid h)}=
\frac{(k-h)^2-k^2}{2\sigma^2}=\frac{h^2-2kh}{2\sigma^2}.
\]
Because $k\ge h-t$, we have
\[
L_h^{\mathrm G}(k)\le \frac{h(2t-h)}{2\sigma^2}\le \frac{H(2t-H)}{2\sigma^2}\le \varepsilon.
\]
Therefore the overlap positive-part term vanishes for all $h\le H$. Only the support-leakage term remains, which yields the displayed formula and bound.
\end{proof}

\begin{corollary}
\label{cor:gaussian-window}
A sufficient condition for the $s$-sparse Gaussian mechanism to satisfy $(\varepsilon,\delta)$-LDP on range $H$ is that
\[
H-1+\sqrt{2\sigma^2\log\frac{H}{\delta}}\le t\le \frac{H}{2}+\frac{\sigma^2\varepsilon}{H},
\qquad t=\frac{s-1}{2}.
\]
Equivalently,
\[
2H-1+2\sqrt{2\sigma^2\log\frac{H}{\delta}}\le s\le H+1+\frac{2\sigma^2\varepsilon}{H}.
\]
\end{corollary}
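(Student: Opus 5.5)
The plan is to reduce the corollary entirely to Theorem~\ref{thm:gaussian-clean-bound}. I will check that the two-sided window on $t$ implies both hypotheses of that theorem, and that it also makes the resulting leakage bound at most $\delta$. The "equivalently" form in terms of $s$ then follows by substituting $t=(s-1)/2$, i.e.\ multiplying through by $2$ and adding $1$.

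\textbf{Upper end of the window.} The condition $t\le \frac{H}{2}+\frac{\sigma^2\varepsilon}{H}$ (for $H>0$) rearranges to $2tH-H^2\le 2\sigma^2\varepsilon$. This is exactly the hypothesis $\varepsilon\ge H(2t-H)/(2\sigma^2)$ of Theorem~\ref{thm:gaussian-clean-bound}, so the overlap excess-loss term vanishes for every separation $h\le H$. This matches the threshold $\kappa_h$ of Proposition~\ref{prop:gaussian-threshold}: no overlap index $k\ge h-t$ lies below it.

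\textbf{Lower end of the window.} I first dispose of the trivial cases. If $H=0$, only identical inputs are compared and the defect is $0$. If $\delta\ge H\ge 1$, the defect is at most $1\le\delta$. So assume $H\ge 1$ and $\delta<H$; then $\sqrt{2\sigma^2\log(H/\delta)}>0$.
\begin{itemize}
\item \emph{Support-size hypothesis.} Since $t$ and $H$ are integers, $t\ge H-1+\sqrt{2\sigma^2\log(H/\delta)}>H-1$ forces $t\ge H$. This gives $s\ge 2H+1$, the second hypothesis of Theorem~\ref{thm:gaussian-clean-bound}.
\item \emph{Leakage bound.} Again from the lower end, $t-H+1\ge\sqrt{2\sigma^2\log(H/\delta)}\ge 0$. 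Squaring preserves the inequality because both sides are nonnegative, giving $(t-H+1)^2/(2\sigma^2)\ge\log(H/\delta)$. Hence $H e^{-(t-H+1)^2/(2\sigma^2)}\le\delta$.
\end{itemize}

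\textbf{Conclusion.} By Theorem~\ref{thm:gaussian-clean-bound}, $\max_{0\le h\le H}\delta_h^{\mathrm G}\le\delta$. By translation invariance and the hockey-stick characterization in Theorem~\ref{thm:gaussian-approx-general}, this is $(\varepsilon,\delta)$-LDP on range $H$.

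The main obstacle is the hidden hypothesis $s\ge 2H+1$. The window itself only guarantees $t\ge H-1+(\cdot)$, so the step from $t>H-1$ to $t\ge H$ relies on integrality of $t$ and $H$ together with strict positivity of the square root. That positivity is why the degenerate case $\delta\ge H$ must be handled separately first. If integrality of $H$ is not assumed, I would instead add $t\ge H$ as an explicit side condition.
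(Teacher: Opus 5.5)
Your proposal is correct and takes the same route as the paper, which also reduces the corollary to Theorem~\ref{thm:gaussian-clean-bound}: the lower end of the window gives the leakage bound $He^{-(t-H+1)^2/(2\sigma^2)}\le\delta$, and the upper end is the overlap condition $\varepsilon\ge H(2t-H)/(2\sigma^2)$. You are more careful than the paper, whose proof never checks the hypothesis $s\ge 2H+1$: your integrality step ($\delta<H \Rightarrow t>H-1 \Rightarrow t\ge H$) supplies it, and your caveat for non-integer $H$ is genuinely needed. For example, $t=2$, $H=2.5$, $\sigma^2=0.01$, $\varepsilon=187.5$, $\delta=2.5e^{-12.5}$ satisfy the window with equality at both ends, yet the overlap term at $h=2$, $k=0$ alone is $(1-e^{-12.5})/\Gamma_2(\sigma)\approx 1\gg\delta$.
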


\begin{proof}
The lower bound on $t$ is equivalent to
\[
H e^{-(t-H+1)^2/(2\sigma^2)}\le \delta,
\]
which is sufficient by Theorem~\ref{thm:gaussian-clean-bound}. The upper bound is equivalent to the overlap condition in the same theorem.
\end{proof}

\begin{proposition}
\label{prop:gaussian-distortion}
Let $t=(s-1)/2$. Then
\begin{align*}
R_1^{\mathrm G}(\sigma,s)&=\frac{2\sum_{j=1}^{t} j e^{-j^2/(2\sigma^2)}}{\Gamma_t(\sigma)},
\\
R_2^{\mathrm G}(\sigma,s)&=\frac{2\sum_{j=1}^{t} j^2 e^{-j^2/(2\sigma^2)}}{\Gamma_t(\sigma)}.
\end{align*}
Moreover, for fixed $\sigma$, both moments are nondecreasing in $s$.
\end{proposition}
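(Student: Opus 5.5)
The plan is to compute both moments directly from the definition of $G_{\sigma,s}$, and then prove monotonicity by a mediant (weighted-average) argument. By translation invariance, fix $x=0$, so that $Y$ takes values $k\in\{-t,\dots,t\}$ with probability $e^{-k^2/(2\sigma^2)}/\Gamma_t(\sigma)$.

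\emph{Formulas.} For $m\in\{1,2\}$ we have $\E[|Y|^m]=\Gamma_t(\sigma)^{-1}\sum_{k=-t}^{t}|k|^m e^{-k^2/(2\sigma^2)}$. The term $k=0$ contributes nothing. The summand is even in $k$, so pairing $k$ with $-k$ gives $2\sum_{j=1}^{t} j^m e^{-j^2/(2\sigma^2)}$. This yields $R_1^{\mathrm G}$ for $m=1$ and $R_2^{\mathrm G}$ for $m=2$, since $(Y-x)^2=|Y-x|^2$.

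\emph{Monotonicity.} Since $s$ ranges over odd integers, increasing $s$ to the next admissible value means replacing $t$ by $t+1$. Write $w_j:=e^{-j^2/(2\sigma^2)}$, $N_t:=2\sum_{j=1}^{t} j^m w_j$ and $D_t:=\Gamma_t(\sigma)=1+2\sum_{j=1}^t w_j$. Then
\[
N_{t+1}=N_t+2(t+1)^m w_{t+1},\qquad D_{t+1}=D_t+2w_{t+1}.
\]
Hence $N_{t+1}/D_{t+1}$ is a mediant of $N_t/D_t$ and $(t+1)^m$, with positive weights $D_t$ and $2w_{t+1}$. It therefore lies between these two numbers. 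So $N_{t+1}/D_{t+1}\ge N_t/D_t$ holds as soon as $(t+1)^m\ge N_t/D_t$. This last inequality is immediate: $N_t/D_t=\E[|Y|^m]$ under the radius-$t$ mechanism, and $|Y|\le t$ almost surely, so $N_t/D_t\le t^m<(t+1)^m$. Induction on $t$ then gives that both moments are nondecreasing (in fact strictly increasing) in $s$.

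\emph{Main obstacle.} There is no real obstacle here. The only point needing care is to make the informal argument of Proposition~\ref{prop:distortion} rigorous. Adding outer mass also renormalizes the existing mass, which is why the mediant comparison with the current mean $N_t/D_t$ is the key step. The identical argument with $w_j=e^{-\lambda j}$ also makes Proposition~\ref{prop:distortion} precise.
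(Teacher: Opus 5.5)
Your proof is correct and takes the same route as the paper. Through its pointer to the Laplace case, the paper obtains the formulas by symmetry and justifies monotonicity with the one-line remark that increasing $t$ adds probability mass at larger distances. Your mediant argument is the rigorous version of that remark: it accounts explicitly for the renormalization of the existing mass, and it also shows the moments are strictly increasing.
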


\begin{proof}
The proof is the same as in the discrete-Laplace case.
\end{proof}

\begin{theorem}
Fix $\sigma>0$, a privacy range $H$, and target privacy parameters $(\varepsilon,\delta)$. Among all support sizes $s$ such that the $s$-sparse Gaussian mechanism satisfies $(\varepsilon,\delta)$-LDP on range $H$, the distortion-optimal choice of $s$ is the smallest feasible support size.
\end{theorem}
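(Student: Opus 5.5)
The plan mirrors the proof of Theorem~\ref{thm:optimal-sparsity}: reduce the claim to monotonicity of the distortion moments in $s$ and invoke Proposition~\ref{prop:gaussian-distortion}. Fix $\sigma$, $H$ and $(\varepsilon,\delta)$. Let $\mathcal F\subseteq\{1,3,5,\dots\}$ be the set of odd support sizes for which $G_{\sigma,s}$ is $(\varepsilon,\delta)$-LDP on range $H$. If $\mathcal F$ is empty the statement is vacuous. Otherwise, since it is a nonempty set of positive integers, it has a minimum $s_{\min}$; by the feasibility theorem for the Gaussian family, $s_{\min}\ge H+1$ when $\delta<1$. No structure of $\mathcal F$ beyond nonemptiness is needed. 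In particular, we need not know whether $\mathcal F$ is an interval: Corollary~\ref{cor:gaussian-window} only gives a sufficient window, and the overlap term may reappear for large $t$.

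The key step is to make precise that $R_1^{\mathrm G}(\sigma,s)$ and $R_2^{\mathrm G}(\sigma,s)$ are nondecreasing in $s$. Proposition~\ref{prop:gaussian-distortion} asserts this, but I would verify it directly, since the proof there is only by analogy. Write $p_t(j)=e^{-j^2/(2\sigma^2)}/\Gamma_t(\sigma)$ for $|j|\le t$. Passing from $t$ to $t+1$ gives
\[
R_m^{(t+1)}=\frac{\Gamma_t R_m^{(t)}+2(t+1)^m e^{-(t+1)^2/(2\sigma^2)}}{\Gamma_t+2e^{-(t+1)^2/(2\sigma^2)}},\qquad m\in\{1,2\}.
\]
This is a convex combination of $R_m^{(t)}$ and $(t+1)^m$. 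Since $R_m^{(t)}$ is an average of $|j|^m$ with $|j|\le t$, we have $R_m^{(t)}\le t^m<(t+1)^m$. Hence $R_m^{(t+1)}\ge R_m^{(t)}$. Induction over odd $s$ then gives monotonicity for all pairs $s\le s'$.

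Conclusion: for every $s\in\mathcal F$ we have $s\ge s_{\min}$, so $R_m^{\mathrm G}(\sigma,s_{\min})\le R_m^{\mathrm G}(\sigma,s)$ for $m=1,2$. Thus $s_{\min}$ minimizes both distortion measures over all feasible support sizes, which is the claim.

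The only real obstacle is the monotonicity step, specifically ensuring the convex-combination argument is stated rigorously rather than heuristically. The existence of a minimum and the final comparison are routine.
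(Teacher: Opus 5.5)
Your proposal is correct and takes the same route as the paper. The paper's proof just invokes Proposition~\ref{prop:gaussian-distortion} (that $R_1^{\mathrm G}$ and $R_2^{\mathrm G}$ are nondecreasing in $s$) and concludes that the smallest feasible $s$ minimizes distortion. Your identity $R_m^{(t+1)}=\bigl(\Gamma_t R_m^{(t)}+2(t+1)^m e^{-(t+1)^2/(2\sigma^2)}\bigr)/\Gamma_{t+1}$, combined with $R_m^{(t)}\le t^m<(t+1)^m$, is a valid rigorous proof of that monotonicity. The paper itself only justifies the monotonicity heuristically, by analogy with the Laplace case.
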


\begin{proof}
The Gaussian distortion moments are nondecreasing in $s$, so the smallest feasible support size minimizes distortion.
\end{proof}

\begin{remark}
\label{rem:laplace-vs-gaussian}
For the Laplace family, the overlap privacy loss can be eliminated by the simple condition $\lambda H\le \varepsilon$, after which enlarging the support mainly suppresses support leakage. For the Gaussian family, the overlap loss depends quadratically on the support radius and therefore cannot be ignored as easily: increasing $s$ reduces support leakage but also enlarges the region on which the Gaussian centers separate. This creates a sharper privacy--sparsity tradeoff.
\end{remark}

\section{Numerical illustrations}\label{sec:numerics}

Here, we present several numerical illustrations of the tradeoffs obtained above. Throughout this section, the reported values are obtained by direct evaluation of the exact privacy-defect formulas from Theorems~\ref{thm:radius-exact} and~\ref{thm:gaussian-radius-exact}, together with the exact distortion formulas from Proposition~\ref{prop:distortion} and Proposition~\ref{prop:gaussian-distortion}. Thus the tables and figures below are not Monte Carlo simulations; they are exact evaluations of the derived expressions.

For the Laplace family, we report
\[
\delta^*(\varepsilon,\lambda,s;H):=\max_{1\le h\le H}\delta_h(\varepsilon,\lambda,s),
\]
and for the Gaussian family we report
\[
\delta^{\mathrm G,*}(\varepsilon,\sigma,s;H):=\max_{1\le h\le H}\delta_h^{\mathrm G}(\varepsilon,\sigma,s).
\]

\subsection{Support-size sweeps}

Table~\ref{tab:laplace-sweep-s} fixes $(\varepsilon,\lambda,H)=(1,0.5,3)$ and varies the support size $s$. The first row illustrates the feasibility threshold from Theorem~\ref{thm:feasibility}: when $s=3$, one has $s-1=2<H=3$, and the exact privacy defect is $1$. Once the support becomes large enough to create overlap at the full privacy range, the privacy defect decreases while both distortion moments increase.

\begin{table}[ht]
\centering
\small
\begin{tabular}{cccc}
\toprule
$s$ & $\delta^*(1,0.5,s;3)$ & $R_1(0.5,s)$ & $R_2(0.5,s)$ \\
\midrule
3  & 1.0000 & 0.5481 & 0.5481 \\
5  & 0.6696 & 0.9104 & 1.4094 \\
7  & 0.4686 & 1.1851 & 2.4071 \\
9  & 0.3706 & 1.3929 & 3.4108 \\
11 & 0.3179 & 1.5475 & 4.3362 \\
13 & 0.2880 & 1.6603 & 5.1386 \\
\bottomrule
\end{tabular}
\caption{Laplace support-size sweep at $(\varepsilon,\lambda,H)=(1,0.5,3)$. Privacy improves with $s$, while distortion worsens with $s$.}
\label{tab:laplace-sweep-s}
\end{table}

\begin{figure}[ht]
\centering
\includegraphics[width=0.46\textwidth]{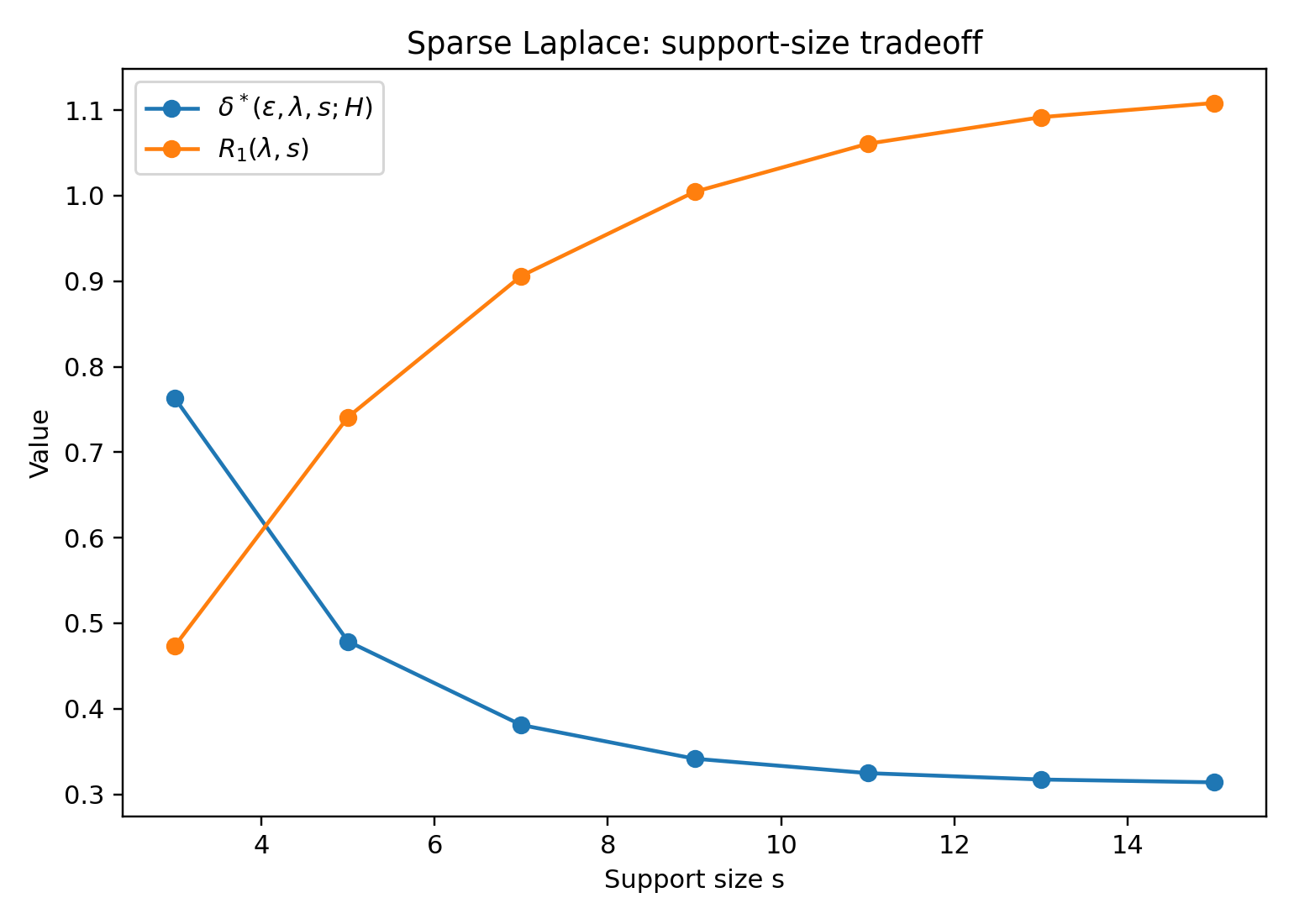}
\caption{Laplace support-size tradeoff for $(\varepsilon,\lambda,H)=(1,0.8,2)$. The exact privacy defect decreases with $s$, while the first distortion moment increases.}
\label{fig:laplace-support}
\end{figure}

The table and figure make the basic privacy--utility tension visible: larger supports reduce support leakage but increase distortion. This numerically confirms the design logic behind Theorem~\ref{thm:optimal-sparsity}.

For the Gaussian family, Table~\ref{tab:gaussian-sweep-s} fixes $(\varepsilon,\sigma,H)=(1,2,3)$ and varies the support size $s$. The same feasibility threshold appears: for $s=3$, privacy fails completely at range $H=3$, whereas larger supports reduce the exact privacy defect. As in the Laplace case, distortion increases with support size.

\begin{table}[ht]
\centering
\small
\begin{tabular}{cccc}
\toprule
$s$ & $\delta^{\mathrm G,*}(1,2,s;3)$ & $R_1^{\mathrm G}(2,s)$ & $R_2^{\mathrm G}(2,s)$ \\
\midrule
3  & 1.0000 & 0.6383 & 0.6383 \\
5  & 0.6257 & 1.0536 & 1.6634 \\
7  & 0.4173 & 1.3267 & 2.6929 \\
9  & 0.3468 & 1.4744 & 3.4283 \\
11 & 0.3255 & 1.5365 & 3.8084 \\
13 & 0.3203 & 1.5563 & 3.9513 \\
15 & 0.3193 & 1.5611 & 3.9906 \\
\bottomrule
\end{tabular}
\caption{Gaussian support-size sweep at $(\varepsilon,\sigma,H)=(1,2,3)$. Privacy improves with $s$, while distortion worsens with $s$.}
\label{tab:gaussian-sweep-s}
\end{table}

\begin{figure}[ht]
\centering
\includegraphics[width=0.46\textwidth]{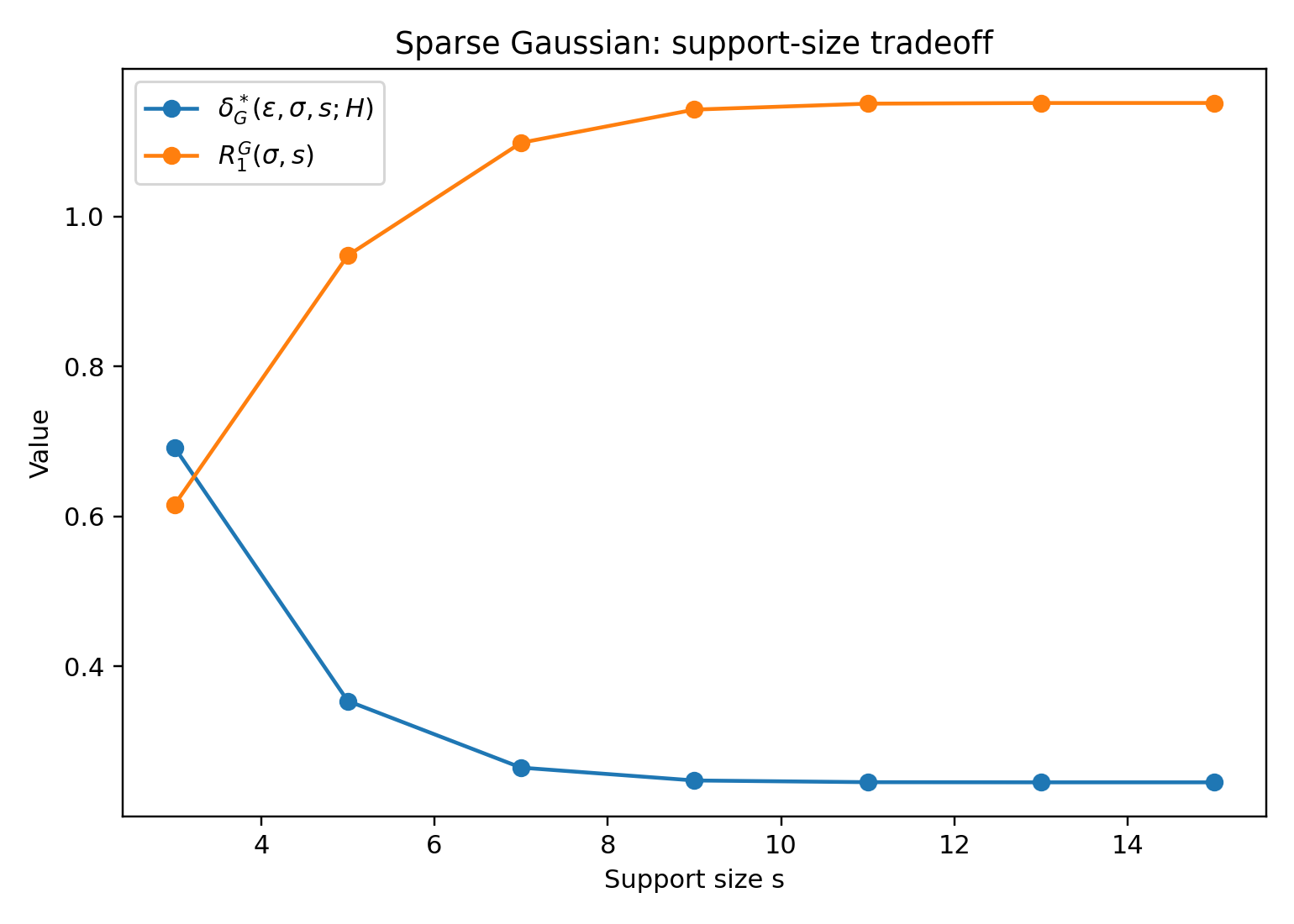}
\caption{Gaussian support-size tradeoff for $(\varepsilon,\sigma,H)=(1,1.5,2)$. The same privacy--distortion tension appears, but privacy decreases more slowly with $s$ than in the Laplace case.}
\label{fig:gaussian-support}
\end{figure}

In comparison with Table~\ref{tab:laplace-sweep-s}, the Gaussian defect decreases more slowly in this regime, which is consistent with the sharper overlap effects described in Remark~\ref{rem:laplace-vs-gaussian}.
\subsection{Parameter sweeps at fixed sparsity}

Table~\ref{tab:laplace-sweep-lambda} fixes $(\varepsilon,H,s)=(1,2,7)$ and varies the concentration parameter $\lambda$ of the Laplace distribution. Distortion decreases monotonically as $\lambda$ increases, because larger $\lambda$ concentrates more mass near the true input. By contrast, the privacy defect is non-monotone: it is minimized near $\lambda\approx 0.4$ in this example. For small $\lambda$, the mechanism is too diffuse and support leakage remains substantial; for large $\lambda$, the overlap privacy loss becomes dominant.

\begin{table}[ht]
\centering
\small
\begin{tabular}{cccc}
\toprule
$\lambda$ & $\delta^*(1,\lambda,7;2)$ & $R_1(\lambda,7)$ & $R_2(\lambda,7)$ \\
\midrule
0.2 & 0.2402 & 1.4996 & 3.3254 \\
0.4 & 0.1954 & 1.2872 & 2.6959 \\
0.6 & 0.2466 & 1.0870 & 2.1390 \\
0.8 & 0.3811 & 0.9061 & 1.6695 \\
1.0 & 0.4985 & 0.7483 & 1.2890 \\
1.2 & 0.5974 & 0.6142 & 0.9899 \\
\bottomrule
\end{tabular}
\caption{Laplace concentration sweep at $(\varepsilon,H,s)=(1,2,7)$. The exact privacy defect is non-monotone in $\lambda$, while distortion decreases with $\lambda$.}
\label{tab:laplace-sweep-lambda}
\end{table}

\begin{figure}[ht]
\centering
\includegraphics[width=0.46\textwidth]{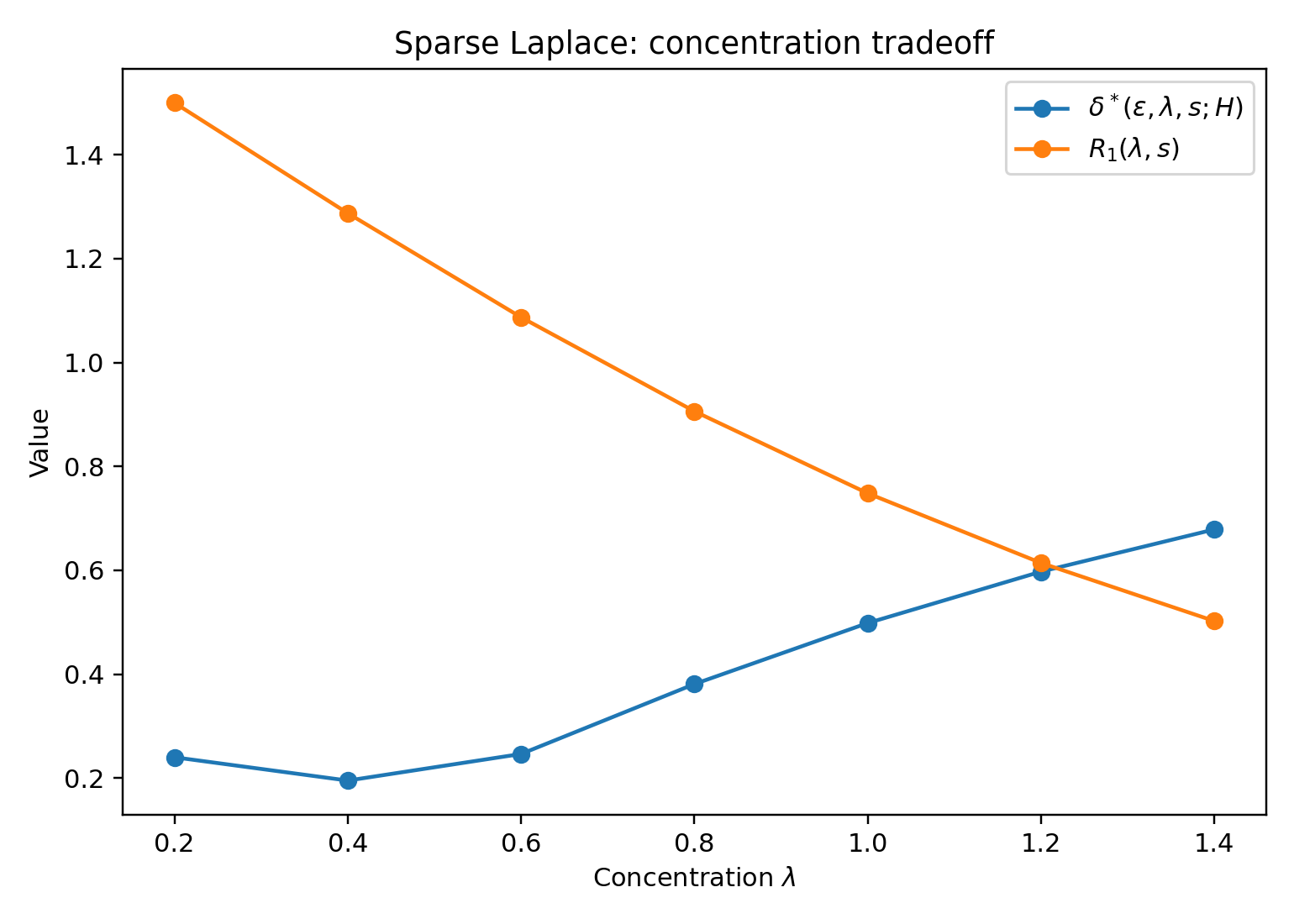}
\caption{Laplace concentration sweep. At fixed sparsity, increasing $\lambda$ improves utility but can worsen privacy after a point.}
\label{fig:laplace-lambda}
\end{figure}

This example shows that the support size is not the only design variable. Even with fixed sparsity, one must still balance concentration against privacy. In particular, minimizing distortion by taking a very large $\lambda$ may substantially worsen privacy.

Table~\ref{tab:gaussian-sweep-sigma} fixes $(\varepsilon,H,s)=(1,2,7)$ and varies the Gaussian scale $\sigma$. The privacy defect is again non-monotone: in this example it is smallest near $\sigma\approx 2$. Small $\sigma$ makes the mechanism highly concentrated and creates large privacy loss on the overlap; large $\sigma$ diffuses the mechanism and increases distortion without further improving privacy.

\begin{table}[ht]
\centering
\small
\begin{tabular}{cccc}
\toprule
$\sigma$ & $\delta^{\mathrm G,*}(1,\sigma,7;2)$ & $R_1^{\mathrm G}(\sigma,7)$ & $R_2^{\mathrm G}(\sigma,7)$ \\
\midrule
0.8 & 0.6886 & 0.5469 & 0.6398 \\
1.0 & 0.5407 & 0.7267 & 0.9959 \\
1.2 & 0.4009 & 0.8915 & 1.3997 \\
1.5 & 0.2651 & 1.0984 & 1.9831 \\
2.0 & 0.2012 & 1.3267 & 2.6929 \\
2.5 & 0.2301 & 1.4551 & 3.1140 \\
3.0 & 0.2466 & 1.5306 & 3.3673 \\
\bottomrule
\end{tabular}
\caption{Gaussian scale sweep at $(\varepsilon,H,s)=(1,2,7)$. The exact privacy defect is non-monotone in $\sigma$, while distortion increases with $\sigma$.}
\label{tab:gaussian-sweep-sigma}
\end{table}

\begin{figure}[ht]
\centering
\includegraphics[width=0.46\textwidth]{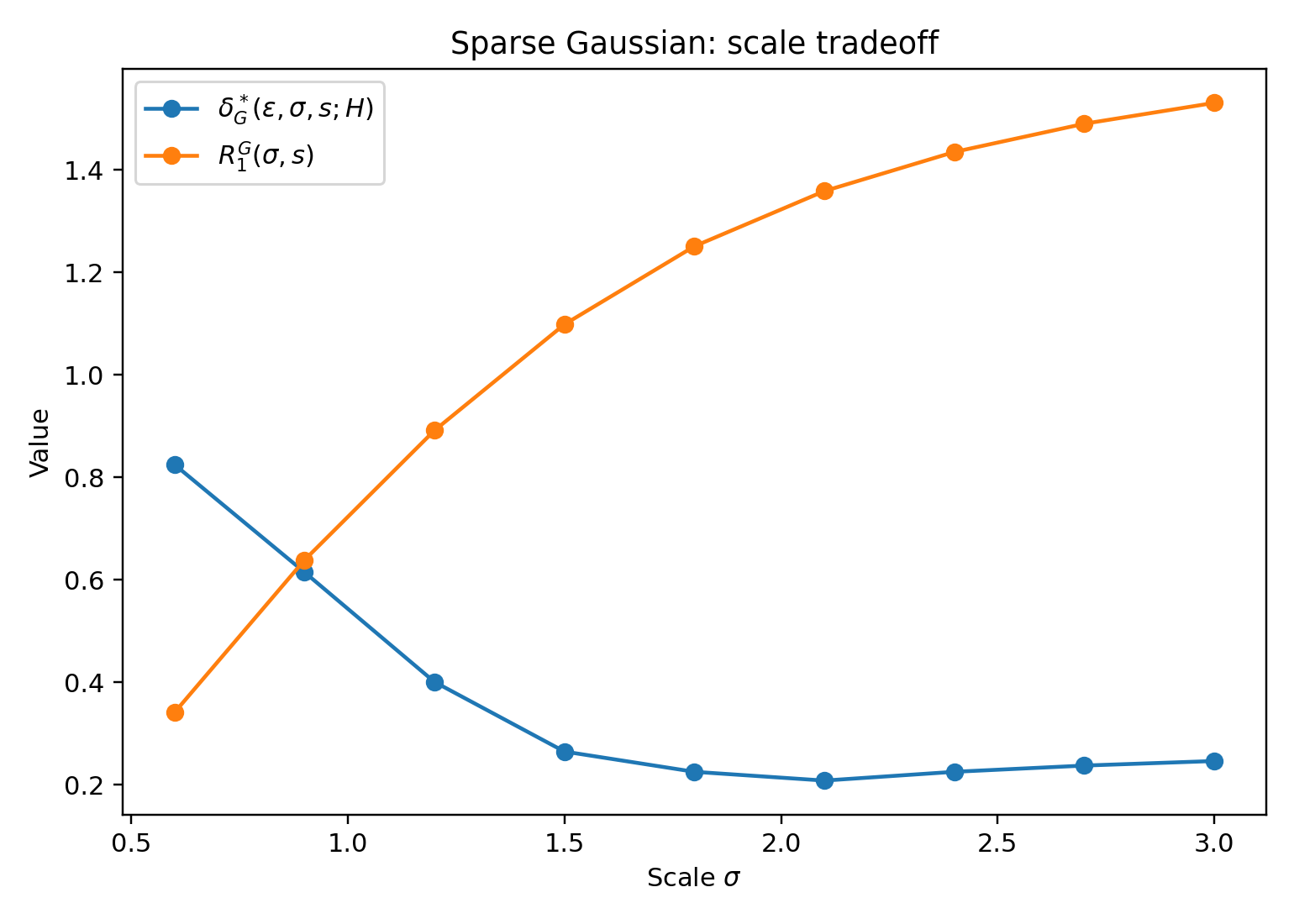}
\caption{Gaussian scale sweep. The privacy defect exhibits an interior minimizer, highlighting the two-parameter design nature of the Gaussian family.}
\label{fig:gaussian-sigma}
\end{figure}

This table illustrates a key difference from the Laplace case: the Gaussian family naturally exhibits an interior scale that balances support leakage and overlap loss. Thus the Gaussian design problem is genuinely two-dimensional, involving both support size and scale.


\section{Conclusion}

In this paper, we studied sparse locally private mechanisms in
which the conditional output law is supported on a small
admissible set $S(x)$ that may depend on the private input
$x$. Our main focus was on two concrete families: sparse
discrete-Laplace mechanisms and sparse Gaussian mechanisms.
For both families, we derived exact characterizations of pure
$\varepsilon$-LDP and approximate $(\varepsilon,\delta)$-LDP,
thereby making explicit how privacy depends on support
mismatch and overlap between neighboring conditional output
distributions.

A central structural conclusion of the paper is that
input-dependent sparse supports with pure
local privacy are obtained when all supports coincide. By contrast, in the
approximate privacy regime, sparsity becomes meaningful and
useful: the privacy defect decomposes naturally into a
support-leakage term and an overlap excess-loss term, so that
the $\delta$-budget admits a clear interpretation as the price
paid for input-dependent support truncation. This provides a
precise mechanism-level explanation for why sparsity is
compatible with $(\varepsilon,\delta)$-LDP but not with pure
$\varepsilon$-LDP.

For the radius-truncated models, we showed that the support
size $s=|S(x)|$ acts as an intrinsic complexity parameter of
the mechanism. On the one hand, nontrivial approximate
privacy requires a minimum degree of support overlap, which
imposes a lower bound on the feasible support size. On the
other hand, increasing the support reduces support leakage but
increases distortion. This leads to a simple and useful design
principle: among all support sizes satisfying the target privacy
constraint, the distortion-optimal choice is the smallest
feasible one. In the sparse discrete-Laplace case, this tradeoff
admits particularly clean sufficient bounds, while in the sparse
Gaussian case the overlap term introduces an additional
quadratic dependence on the support radius, resulting in a
sharper privacy--sparsity tension.

The numerical examples further illustrate that these tradeoffs
are not merely qualitative. The exact privacy defect can vary
non-monotonically with the mechanism parameters, and the
best design may depend jointly on the support size and the
concentration or scale parameter. In particular, the examples
show that support size alone does not determine performance:
even at fixed sparsity, privacy and distortion must still be
balanced through the underlying kernel parameter.


\clearpage 
\bibliographystyle{IEEEtran}
\bibliography{IEEEabrv,ITW}

\begin{thebibliography}{10}
\providecommand{\url}[1]{#1}
\csname url@samestyle\endcsname
\providecommand{\newblock}{\relax}
\providecommand{\bibinfo}[2]{#2}
\providecommand{\BIBentrySTDinterwordspacing}{\spaceskip=0pt\relax}
\providecommand{\BIBentryALTinterwordstretchfactor}{4}
\providecommand{\BIBentryALTinterwordspacing}{\spaceskip=\fontdimen2\font plus
\BIBentryALTinterwordstretchfactor\fontdimen3\font minus
  \fontdimen4\font\relax}
\providecommand{\BIBforeignlanguage}[2]{{%
\expandafter\ifx\csname l@#1\endcsname\relax
\typeout{** WARNING: IEEEtran.bst: No hyphenation pattern has been}%
\typeout{** loaded for the language `#1'. Using the pattern for}%
\typeout{** the default language instead.}%
\else
\language=\csname l@#1\endcsname
\fi
#2}}
\providecommand{\BIBdecl}{\relax}
\BIBdecl

\bibitem{dwork2006calibrating}
C.~Dwork, F.~McSherry, K.~Nissim, and A.~Smith, ``Calibrating noise to
  sensitivity in private data analysis,'' in \emph{Theory of cryptography
  conference}.\hskip 1em plus 0.5em minus 0.4em\relax Springer, 2006, pp.
  265--284.

\bibitem{dwork2014book}
C.~Dwork and A.~Roth, ``The algorithmic foundations of differential privacy,''
  \emph{Foundations and trends{\textregistered} in theoretical computer
  science}, vol.~9, no. 3-4, pp. 211--487, 2014.

\bibitem{rappor2014}
{\'U}.~Erlingsson, V.~Pihur, and A.~Korolova, ``Rappor: Randomized aggregatable
  privacy-preserving ordinal response,'' in \emph{Proceedings of the 2014 ACM
  SIGSAC conference on computer and communications security}, 2014, pp.
  1054--1067.

\bibitem{acharya2021sparse}
J.~Acharya, P.~Kairouz, Y.~Liu, and Z.~Sun, ``Estimating sparse discrete
  distributions under privacy and communication constraints,'' in
  \emph{Algorithmic Learning Theory}.\hskip 1em plus 0.5em minus 0.4em\relax
  PMLR, 2021, pp. 79--98.

\bibitem{andres2013geo}
M.~E. Andr{\'e}s, N.~E. Bordenabe, K.~Chatzikokolakis, and C.~Palamidessi,
  ``Geo-indistinguishability: Differential privacy for location-based
  systems,'' in \emph{Proceedings of the 2013 ACM SIGSAC conference on Computer
  \& communications security}, 2013, pp. 901--914.

\bibitem{kacem2018geometric}
L.~Kacem and C.~Palamidessi, ``Geometric noise for locally private counting
  queries,'' in \emph{Proceedings of the 13th Workshop on Programming Languages
  and Analysis for Security}, 2018, pp. 13--16.

\bibitem{balle2018gaussian}
B.~Balle and Y.-X. Wang, ``Improving the gaussian mechanism for differential
  privacy: Analytical calibration and optimal denoising,'' in
  \emph{International conference on machine learning}.\hskip 1em plus 0.5em
  minus 0.4em\relax PMLR, 2018, pp. 394--403.

\bibitem{canonne2020discretegaussian}
C.~L. Canonne, G.~Kamath, and T.~Steinke, ``The discrete gaussian for
  differential privacy,'' \emph{Advances in Neural Information Processing
  Systems}, vol.~33, pp. 15\,676--15\,688, 2020.

\bibitem{warner1965randomized}
S.~L. Warner, ``Randomized response: A survey technique for eliminating evasive
  answer bias,'' \emph{Journal of the American Statistical Association},
  vol.~60, no. 309, pp. 63--69, 1965.

\bibitem{duchi2014local}
J.~C. Duchi, M.~I. Jordan, and M.~J. Wainwright, ``Local privacy and minimax
  bounds: Sharp rates for probability estimation,'' in \emph{Advances in Neural
  Information Processing Systems}, 2014.

\bibitem{duchi2018minimax}
------, ``Minimax optimal procedures for locally private estimation,''
  \emph{Journal of the American Statistical Association}, vol. 113, no. 521,
  pp. 182--201, 2018.

\bibitem{mcsherry2007mechanism}
F.~McSherry and K.~Talwar, ``Mechanism design via differential privacy,'' in
  \emph{48th Annual IEEE Symposium on Foundations of Computer Science
  (FOCS'07)}.\hskip 1em plus 0.5em minus 0.4em\relax IEEE, 2007, pp. 94--103.

\bibitem{geng2012optimal}
Q.~Geng and P.~Viswanath, ``The optimal mechanism in differential privacy,'' in
  \emph{2014 IEEE international symposium on information theory}.\hskip 1em
  plus 0.5em minus 0.4em\relax IEEE, 2014, pp. 2371--2375.

\bibitem{kairouz2014extremal}
P.~Kairouz, S.~Oh, and P.~Viswanath, ``Extremal mechanisms for local
  differential privacy,'' \emph{Advances in neural information processing
  systems}, vol.~27, 2014.

\bibitem{kairouz2016discrete}
P.~Kairouz, K.~Bonawitz, and D.~Ramage, ``Discrete distribution estimation
  under local privacy,'' in \emph{International Conference on Machine
  Learning}.\hskip 1em plus 0.5em minus 0.4em\relax PMLR, 2016, pp. 2436--2444.

\bibitem{agarwal2021skellam}
N.~Agarwal, P.~Kairouz, and Z.~Liu, ``The skellam mechanism for differentially
  private federated learning,'' \emph{Advances in neural information processing
  systems}, vol.~34, pp. 5052--5064, 2021.

\bibitem{chen2022poisson}
W.-N. Chen, A.~{\"O}zg{\"u}r, and P.~Kairouz, ``The poisson binomial mechanism
  for secure and private federated learning,'' \emph{arXiv preprint
  arXiv:2207.09916}, 2022.

\bibitem{murakami2019utility}
T.~Murakami, H.~Hino, J.~Sakuma, Y.~Kawamoto, and C.~Troncoso,
  ``Utility-optimized local differential privacy mechanisms for distribution
  estimation,'' in \emph{USENIX Security Symposium}, 2019.

\bibitem{Sadeghi_Chien_2024}
\BIBentryALTinterwordspacing
P.~Sadeghi and C.-H. Chien, ``On the connection between the abs perturbation
  methodology and differential privacy,'' \emph{Journal of Privacy and
  Confidentiality}, vol.~14, no.~2, Jul. 2024. [Online]. Available:
  \url{https://journalprivacyconfidentiality.org/index.php/jpc/article/view/859}
\BIBentrySTDinterwordspacing

\end{thebibliography}


\begin{thebibliography}{99}

\bibitem{dwork2006calibrating}
Cynthia Dwork, Frank McSherry, Kobbi Nissim, and Adam Smith.
\newblock Calibrating noise to sensitivity in private data analysis.
\newblock In \emph{Proceedings of the Third Theory of Cryptography Conference (TCC)}, pages 265--284, 2006.

\bibitem{dwork2014book}
Cynthia Dwork and Aaron Roth.
\newblock \emph{The Algorithmic Foundations of Differential Privacy}.
\newblock Foundations and Trends in Theoretical Computer Science, 9(3--4):211--407, 2014.

\bibitem{rappor2014}
\'Ulfar Erlingsson, Vasyl Pihur, and Aleksandra Korolova.
\newblock {RAPPOR}: Randomized aggregatable privacy-preserving ordinal response.
\newblock In \emph{Proceedings of the 21st ACM SIGSAC Conference on Computer and Communications Security}, pages 1054--1067, 2014.

\bibitem{acharya2021sparse}
Jayadev Acharya, Peter Kairouz, Yuhan Liu, and Ziteng Sun.
\newblock Estimating sparse discrete distributions under privacy and communication constraints.
\newblock In \emph{Proceedings of the 32nd International Conference on Algorithmic Learning Theory}, volume 132 of \emph{Proceedings of Machine Learning Research}, pages 79--98, 2021.

\bibitem{andres2013geo}
Miguel E. Andr\'es, Nicol\'as E. Bordenabe, Konstantinos Chatzikokolakis, and Catuscia Palamidessi.
\newblock Geo-indistinguishability: Differential privacy for location-based systems.
\newblock In \emph{Proceedings of the 2013 ACM SIGSAC Conference on Computer and Communications Security}, pages 901--914, 2013.

\bibitem{kacem2018geometric}
Lefki Kacem and Catuscia Palamidessi.
\newblock Geometric noise for locally private counting queries.
\newblock In \emph{Proceedings of the 13th Workshop on Programming Languages and Analysis for Security (PLAS@CCS)}, pages 13--16, 2018.

\bibitem{balle2018gaussian}
Borja Balle and Yu-Xiang Wang.
\newblock Improving the Gaussian mechanism for differential privacy: Analytical calibration and optimal denoising.
\newblock In \emph{Proceedings of the 35th International Conference on Machine Learning}, volume 80 of \emph{Proceedings of Machine Learning Research}, pages 394--403, 2018.

\bibitem{canonne2020discretegaussian}
Cl\'ement L. Canonne, Gautam Kamath, and Thomas Steinke.
\newblock The discrete Gaussian for differential privacy.
\newblock In \emph{Advances in Neural Information Processing Systems 33 (NeurIPS 2020)}, pages 15676--15688, 2020.


\bibitem{mcsherry2007mechanism}
Frank McSherry and Kunal Talwar.
\newblock Mechanism design via differential privacy.
\newblock In \emph{Proceedings of the 48th Annual IEEE Symposium on Foundations of Computer Science (FOCS)}, pages 94--103, 2007.

\bibitem{geng2012optimal}
Quan Geng and Pramod Viswanath.
\newblock The optimal mechanism in differential privacy.
\newblock \emph{arXiv preprint arXiv:1212.1186}, 2012.

\bibitem{kairouz2014extremal}
Peter Kairouz, Sewoong Oh, and Pramod Viswanath.
\newblock Extremal mechanisms for local differential privacy.
\newblock In \emph{Advances in Neural Information Processing Systems 27 (NeurIPS 2014)}, pages 2879--2887, 2014.

\bibitem{kairouz2016discrete}
Peter Kairouz, Sewoong Oh, and Pramod Viswanath.
\newblock Discrete distribution estimation under local privacy.
\newblock In \emph{Proceedings of the 33rd International Conference on Machine Learning}, volume 48 of \emph{Proceedings of Machine Learning Research}, pages 2436--2444, 2016.

\bibitem{agarwal2021skellam}
Naman Agarwal, Peter Kairouz, and Ziyu Liu.
\newblock The Skellam mechanism for differentially private federated learning.
\newblock In \emph{Advances in Neural Information Processing Systems 34 (NeurIPS 2021)}, pages 5052--5064, 2021.

\bibitem{chen2022poisson}
Wei-Ning Chen, Ayfer {\"O}zg{\"u}r, and Peter Kairouz.
\newblock The Poisson binomial mechanism for secure and private federated learning.
\newblock \emph{arXiv preprint arXiv:2207.09916}, 2022.

\end{thebibliography}
\end{document}